  \providecommand\BibTeX{{%
    Bib\TeX}}}
\definecolor{gray}{RGB}{221, 221, 221}
\theoremstyle{stype}
\newtheorem{thm}{Theorem}
\crefname{section}{§}{§§}
\Crefname{section}{§}{§§}
\newcolumntype{L}{@{}>{\kern\tabcolsep}l<{\kern\tabcolsep}}
\newcommand{\tabincell}[2]{\begin{tabular}{@{}#1@{}}#2\end{tabular}}
\def\model{BGCH}
\def\emb{\boldsymbol} 
\DeclareMathOperator\sign{sign}
\DeclareMathOperator\argmin{argmin}
\newenvironment{sequation}{\begin{equation}\setlength\abovedisplayskip{2pt}\setlength\belowdisplayskip{2pt}}{\end{equation}}
\newcommand\notsotiny{\@setfontsize\notsotiny\@vipt\@viipt}
\def\BibTeX{{\rm B\kern-.05em{\sc i\kern-.025em b}\kern-.08em
    T\kern-.1667em\lower.7ex\hbox{E}\kern-.125emX}}
\def\cyk{\color{black} }
\def\drop{\color{blue} }
\begin{document}

\title{Bipartite Graph Convolutional Hashing for Effective and Efficient Top-N Search in Hamming Space
}

\author{
Yankai Chen$^1$,
Yixiang Fang$^2$,
Yifei Zhang$^1$,
Irwin King$^1$
}

 \affiliation{
  \city{\{ykchen, yfzhang, king\}@cse.cuhk.edu.hk \quad fangyixiang@cuhk.edu.cn}\\
  \country{$^1$The Chinese University of Hong Kong \quad $^2$The Chinese University of Hong Kong, Shenzhen}
}
\renewcommand{\shortauthors}{Yankai Chen et al.}

\begin{abstract}
Searching on bipartite graphs serves as a fundamental task for various real-world applications, such as recommendation systems, database retrieval, and document querying.
Conventional approaches rely on similarity matching in continuous Euclidean space of vectorized node embeddings. 
To handle intensive similarity computation efficiently, hashing techniques for graph-structured data have emerged as a prominent research direction.
However, despite the retrieval efficiency in Hamming space, previous studies have encountered \textit{catastrophic performance decay}. 
To address this challenge, we investigate the problem of hashing with Graph Convolutional Network for effective Top-N search. 
Our findings indicate the learning effectiveness of incorporating hashing techniques within the exploration of bipartite graph reception fields, as opposed to simply
treating hashing as post-processing to output embeddings.
To further enhance the model performance, we advance upon these findings and propose \textbf{B}ipartite \textbf{G}raph \textbf{C}ontrastive \textbf{H}ashing (\textbf{\model}). 
\model~introduces a novel dual augmentation approach to both \textit{intermediate information} and \textit{hash code outputs} in the latent feature spaces, thereby producing more expressive and robust hash codes within a dual self-supervised learning paradigm.
Comprehensive empirical analyses on six real-world benchmarks validate the effectiveness of our dual feature contrastive learning in boosting the performance of \model~compared to existing approaches.
\end{abstract}

\vspace{-2cm}
\begin{CCSXML}
<ccs2012>
   <concept>
       <concept_id>10010147.10010257.10010293.10010319</concept_id>
       <concept_desc>Computing methodologies~Learning latent representations</concept_desc>
       <concept_significance>500</concept_significance>
       </concept>
   <concept>
       <concept_id>10002951.10003317</concept_id>
       <concept_desc>Information systems~Information retrieval</concept_desc>
       <concept_significance>500</concept_significance>
       </concept>
 </ccs2012>
\end{CCSXML}

\ccsdesc[500]{Computing methodologies~Learning latent representations}
\ccsdesc[500]{Information systems~Information retrieval}

\keywords{Representation Learning; Learning to Hash; Graph Convolutional Network; Bipartite Graph; Hamming Space Search}

\maketitle

\section{{Introduction}}

Bipartite graphs are ubiquitous in the real world for the ease of modeling various Web applications, e.g., as shown in Figure~\ref{fig:intro}(a), user-product recommendation~\cite{ma2020probabilistic,zhang2019star}, and online query-document matching~\cite{zhang2019doc2hash}.
The fundamental task of \textit{Top-N search} involves selecting the best-matched graph nodes for a given query node, enabling effective information filtering.
Machine learning advancements have popularized the use of vectorized representations (\textit{a.k.a.} embeddings) for similarity matching~\cite{grover2016node2vec,cheng2018aspect}, with \textit{Graph Convolutional Networks} (GCNs) standing out for their remarkable performance in capturing high-order connection information and enriching node embeddings~\cite{graphsage,lightgcn}. 
In addition to embedding informativeness, addressing computation latency and memory overhead is crucial for practical application deployment. 
Recently, \textit{learning to hash}~\cite{wang2017survey,jegou2010product} recently provides an alternative option to graph-based models for optimizing the model scalability.
Generally, it learns to convert continuous values into the finite binarized hash codes. 
In lieu of using \textit{full-precision}\footnote{\scriptsize The term ``full-precision'' generally refers to single-precision and double-precision. And we use float32 by default throughout this work for illustration.} 
embeddings, this approach offers space reduction and computation acceleration for Top-N object matching and retrieval in the Hamming space, providing scalability amidst the context of explosive data growth.

\begin{figure}[tp]
\begin{minipage}{0.5\textwidth}
\includegraphics[width=3.55in]{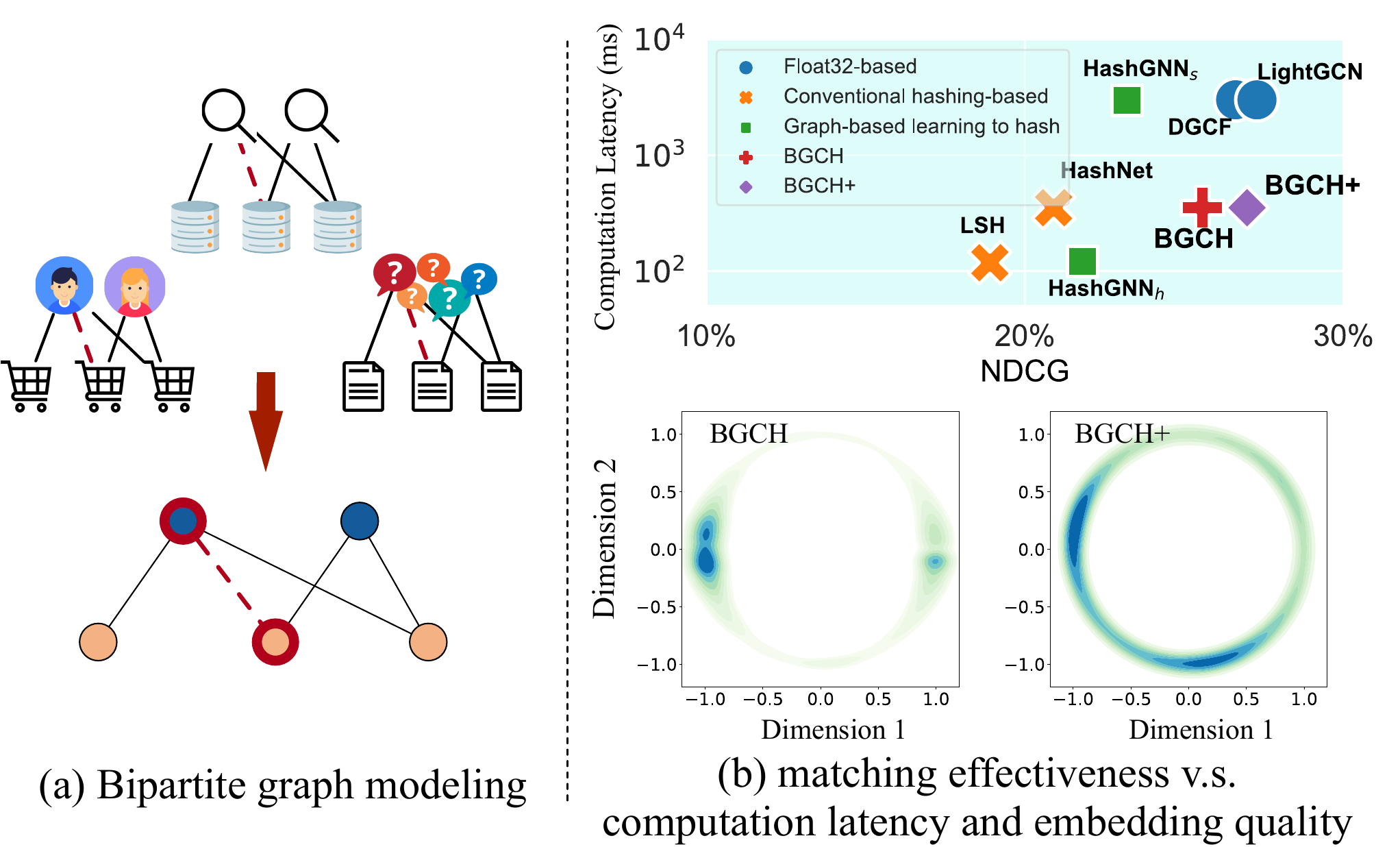}
\end{minipage} 
\caption{Bipartite graph modeling and overall model performance in terms of evaluation metrics and embedding distribution visualization (best view in color).}
\label{fig:intro}
\end{figure}

{\cyk
Despite the promising advantages of bridging GCNs and learning to hash,
simply stacking these two techniques is trivial and thus falls short of performance satisfaction.
Firstly, compared to continuous embeddings, hash codes with the same vector dimension are naturally \textit{less expressive} with finite encoding permutation in Hamming space (e.g., $2^d$ if the dimension is $d$).
\textit{Consequently, this not only leads to a coarse-grained information encoding of the graph nodes, but further derives inaccurate estimation of the pairwise node similarity}.
{\cykk Therefore, the model exhibits a conspicuous performance decay in Top-N matching and ranking.}
Secondly, for $O(1)$ complexity encoding, $\sign(\cdot)$ function is usually adopted in recent work~\cite{qin2020forward,rastegari2016xnor,lin2017towards,hashgnn}.
Despite the simplicity, optimizing neural networks with $\sign(\cdot)$ is not easy, as $\sign(\cdot)$ is not differentiable at 0 and its derivatives are 0 anywhere else.
Previous models usually use \textit{visually similar} but not necessarily \textit{theoretically relevant} functions, e.g., $\tanh(\cdot)$, for gradient estimation.
This may lead to inconsistent optimization directions between forward and backward propagation.
Their associated loss landscapes are usually steep and bumping~\cite{bai2020binarybert}, which further increases the difficulty in optimization.
These factors jointly lead to an intractable model training process.

To tackle these aforementioned challenges, we have progressively studied the problem of learning to hash with GCN on bipartite graphs in~\cite{chen2023bipartite}.
We identify the critical effect of mining the high-order correlation knowledge in bipartite graph exploration and hashing.
In response to this, {\cykk we have developed an effective learning framework namely BGCH (short for \textbf{B}ipartite \textbf{G}raph \textbf{C}onvolutional \textbf{H}ashing)~\cite{chen2023bipartite}}.
Generally, the primary objective of BGCH is to enhance the expressivity of the learned hash codes while ensuring an accordant and robust model optimization flow.
By leveraging BGCH, the empirical results demonstrate notable performance superiority and competitiveness when compared to both hashing-based and full-precision-based counterparts, respectively.
Therefore, BGCH strikes a delicate balance between matching accuracy and computational efficiency, making it a desirable solution for real-world scenarios with limited computation resources.

{\cykk
However, there may still exist two major inadequacy of BGCH~\cite{chen2023bipartite}.
Firstly, due to the discreteness of hash codes, the learning of BGCH~\cite{chen2023bipartite} in generating informative hash codes is characterized by unpleasant inefficiency, as it typically necessitates substantial training iterations for convergence.
Secondly, BGCH currently focuses on alleviating the information loss during the process of graph convolutional hashing, but ignores another fundamental point in learning quality of graph information extraction, especially for sparse graphs. As we include six real-world datasets, their graph densities vary from 0.0419, 0.00084, 0.00267, to 0.0013, 0.00062, and 0.02210.
Since the graph convolutions learn and extract the topological information, these sparse graphs may however provide limited structural knowledge as the supervision signal of model learning, resulting in a less effective subsequent graph hashing.  
}
To overcome these inadequacies, we seek to absorb the self-supervised learning capability by introducing additional regularization signals to BGCH.
Notably, we expect to empower BGCH with the contrastive learning methodology~\cite{he2020momentum,chen2020simple,wu2021self} that can extract meaningful and discriminative features from unlabeled data and regularize representations with good generalization capabilities.
The conventional way to apply contrastive learning on graphs is to first obtain different views of graph structures by explicit data augmentation, e.g., stochastic dropout of nodes, edges, or subgraphs with a certain probability~\cite{wu2021self, you2020graph, zhu2021graph}.
Then the learning objective is to maximize the consistency of the same samples under different views and distinguish different samples simultaneously.
While it seems straightforward to apply similar augmentation techniques to our topic of interest, such stochastic data manipulation may however introduce undesired variations to the learning process, as graph hashing is more sensitive to such structure variations due to its inherent discreteness property.

To provide a more robust contrastive learning paradigm for graph hashing, in this work we put forward the model \textbf{B}ipartite \textbf{G}raph \textbf{C}ontrastive \textbf{H}ashing (\model).
\model~builds upon the topology-aware hashing paradigm of BGCH while proposing a novel \textbf{dual feature contrastive learning} framework.
Specifically, different from the conventional augmentation to input data, we propose to conduct feature-level augmentations to both \textit{intermediate information} and \textit{binarized hash codes} when training.
Via random noise addition, \model~generates perturbation between contrastive views while maintaining the noise with controlled magnitude.
This feature-based augmentation approach, as opposed to explicit manipulation of the graph structure, is thus more flexible and efficient to implement.
Subsequently, these two streams of augmentations are employed within a dual feature contrastive learning objective, incorporating distinct learning granularities.
As shown in the lower section of Figure~\ref{fig:intro}(b), this learning mechanism of \model~eventually enhances the \textit{distribution uniformity} of the resultant target hash codes, thereby improving their robustness when confronted with structural variations.

Based on the well-learned hash codes, \model~substantially improves the performance in Top-N Hamming space retrieval.
The quality-cost trade-off is summarized in the upper section of Figure~\ref{fig:intro}(b), where \model~is compared to several representative counterparts, including float32-based and hashing-based models.
The evaluation is conducted on a real-world bipartite graph with over 10 million observed edges, and further experimental details can be found in~\cref{sec:exp_setup}.
Notably, as the figure lower-right corner indicates the ideal optimal performance, \model~surpasses BGCH and even achieves a comparable prediction accuracy to existing full-precision models, while still maintaining over 8$\times$ computation acceleration. 
{\cykk 
To summarize, our early model BGCH~\cite{chen2023bipartite} has the following contributions:
\begin{itemize}[leftmargin=*]
\item BGCH studies the problem of learning to hash on bipartite graphs with graph representation learning, and proposes an effective approach for effective and efficient Top-N search in Hamming space.

\item BGCH provides both theoretical and empirical effectiveness on several datasets and demonstrates efficiency in both time and space complexity.
\end{itemize}
Extending these early findings, our advanced model \model~further presents the new contributions as follows:
\begin{itemize}[leftmargin=*]
\item \model~focuses on improving the learning quality of graph convolutional hashing via leveraging self-supervised learning. To the best of our knowledge, \model~is the first to elucidate benefits of contrastive learning for graph hashing.

\item \model~proposes a novel dual feature contrastive learning paradigm that operates on feature augmentations of intermediate information and output hash codes, which is different with the conventional manner of complicated structural manipulation. 

\item We conduct a comprehensive evaluation on six real-world datasets. The empirical analyses demonstrate the efficacy of learning high-quality hash codes and its superiority in surpassing its predecessor BGCH and other counterparts.
\end{itemize}
}

We organize this paper as follows. 
We introduce the preliminaries in~\cref{sec:pre} and formally present \model~methodology in~\cref{sec:method}.
The complexity analysis is conducted in~\cref{sec:complexity}.
Then we detail all experiments and review the related work in~\cref{sec:exp} and~\cref{sec:work} with the conclusion in~\cref{sec:con}.
}

\section{{Related Work}}
\label{sec:work}
{\textbf{Graph Convolution Network (GCN)}.}
Early research primarily studies the graph convolutions in the \textit{spectral domain}, such as Laplacian eigen-decomposition~\cite{bruna2013spectral} and Chebyshev polynomials~\cite{defferrard2016convolutional}.
One major issue is that they usually suffer from high computationally expensive. 
To tackle this problem, \textit{spatial-based} GCN models are proposed to re-define the graph convolution operations by aggregating the embeddings of neighbors to refine and update the target node's embedding.
Due to its scalability to large graphs, spatial-based GCN models are successfully applied in various applications~\cite{lightgcn,graphsage,chen2023topological,chen2024deep,wu2023survey,chen2024,zhang2024influential}. 
Despite their effectiveness in embedding latent features for graph nodes, these models usually suffer from inference inefficiency due to the high computational cost associated with calculating similarities between continuous embeddings~\cite{hashgnn}.
To address this issue, \textit{learning to hash} has emerged as a viable solution.

\

{\textbf{Learning to Hash}.}
Learning to hash models have shown great promise in achieving computational acceleration and storage reduction.
By employing similarity-preserving hashing techniques, they can efficiently map high-dimensional dense vectors to a low-dimensional Hamming space, facilitating downstream tasks.
{\cykk Locality Sensitive Hashing (LSH)~\cite{lsh,charikar2002similarity} uses a series of hash projections to collect similar data points to the same or nearby ``buckets'' with high probability.}
More recent research has focused on integrating deep neural network architectures to improve the mode performance\cite{wang2017survey}. 
This has led to a series of follow-up studies for various asks such as fast retrieval of images~\cite{qin2020forward,lin2017towards,hashnet}, documents~\cite{li2014two,chen2022effective,hihpq}, categorical information~\cite{kang2021learning}, and e-commerce products~\cite{zhang2017discrete,chen2022learning}.

To leverage hashing techniques with GCNs, the recent work HashGNN~\cite{hashgnn} investigates \text{learning to hash} for online matching and recommendation.
Specifically, HashGNN combines the GraphSage~\cite{graphsage} as the embedding encoder and applies learning to hash methods to obtain binary encodings.
Its hash encoding process only proceeds at the end of multi-layer graph convolutions, i.e., using the aggregated output of GraphSage for embedding binarization. 
However, this fails to capture intermediate semantics from nodes' different layers of receptive fields~\cite{kipf2016semi}.
HashGNN utilizes the Straight-Through Estimator (STE)~\cite{bengio2013estimating} to assume all gradients of $\sign(\cdot)$ as 1 during backpropagation.
But the integral of 1 is a certain linear function other than the $\sign(\cdot)$, whereas this may lead to inconsistent optimization directions in the model training.
To tackle these issues, our model \model~is proposed.

\

{\textbf{Graph Contrastive Learning.}} 
Graph contrastive learning has recently emerged as a prominent research direction, drawing inspiration from the success of contrastive learning in visual representation tasks~\cite{he2020momentum,chen2020simple}.
In the context of graph data, contrastive learning typically requires the explicit application of data augmentation techniques.
Traditionally, graph data augmentation methods have focused on manipulating the graph structures themselves, employing strategies such as node dropout, edge dropout, and graph random walk. However, these approaches may introduce biases that can affect the quality and integrity of the learned representations~\cite{zhang2022costa}.
To mitigate these challenges, an alternative approach is to perform data augmentation in the feature space rather than directly modifying the input space~\cite{feng2021survey}.
This can be achieved through techniques such as feature perturbation~\cite{zhang2022costa,yu2022graph}. 
By augmenting the feature representations, graph contrastive learning aims to maximize the agreement between two augmented views of the same graph in the latent space, while simultaneously ensuring the differentiation of representations for different nodes.
Notably, recent studies have demonstrated the superior performance of graph contrastive learning in various graph-related tasks~\cite{zhu2020deep,yu2022graph,wu2021self,zhang2023mitigating}, which motivates us to further study the problem of graph contrastive hashing.

\section{{Preliminaries and Problem Formulation}}
\label{sec:pre}

{\textbf{Graph Convolution Network (GCN).}}
The general idea of GCN is to learn node embeddings by \textit{iteratively propagating and aggregating} latent features of node neighbors via the graph topology~\cite{wu2019simplifying,lightgcn,kipf2016semi}:
\begin{sequation}
\boldsymbol{V}_x^{(l)} = AGG\left(\boldsymbol{V}_x^{(l-1)}, \{\boldsymbol{V}_z^{(l-1)}: z \in \mathcal{N}(x)\}\right),
\end{sequation}%
where {\small$\emb{V}_x^{(l)} \in \mathbb{R}^d$} denotes node $x$'s embedding after $l$-th iteration of graph convolutions, indexed in the embedding matrix {\small $\emb{V}$}. 
{\small $\mathcal{N}(x)$} is the set of $x$'s neighbors.
Function $AGG(\cdot, \cdot)$ is the information aggregation function, with several implementations in previous work~\cite{kipf2016semi,graphsage,gat,xu2018powerful}, mainly aiming to transform the center node feature and the neighbor features.
In this work, we adopt the state-of-the-art graph convolution paradigm~\cite{lightgcn}.

\textbf{Bipartite Graph and Adjacency Matrix.} 
The bipartite graph is denoted as $\mathcal{G}=\{\mathcal{V}_1, \mathcal{V}_2, \mathcal{E}\}$, where $\mathcal{V}_1$ and  $\mathcal{V}_2$ are two \textit{disjoint} node sets and $\mathcal{E}$ is the set of edges between nodes in $\mathcal{V}_1$ and $\mathcal{V}_2$.
We can use $\emb{Y} \in \mathbb{R}^{|\mathcal{V}_1|\times |\mathcal{V}_2}|$ to indicate the edge transactions, where 1-valued entries, i.e., $\emb{Y}_{x,y}=1$, indicate there is an observed edge between nodes $x\in \mathcal{V}_1$ and $y \in \mathcal{V}_2$, otherwise $\emb{Y}_{x,y}=0$.
Then the adjacency matrix $\emb{A}$ of the whole graph can be defined as:
\begin{sequation}
\emb{A} = 
\begin{bmatrix}
\emb{0} & \emb{Y} \\
\emb{Y}^T & \emb{0}
\end{bmatrix}.
\end{sequation}%

\textbf{Problem Formulation.}
Given a bipartite graph $\mathcal{G}$ $=$ $\{\mathcal{V}_1, \mathcal{V}_2, \mathcal{E}\}$ along with its adjacency matrix $\emb{A}$, we devote to learn a hashing function:
\begin{sequation}
F(\emb{A} | \Theta) \rightarrow \mathcal{Q},
\end{sequation}%
where $\Theta$ is the set of all learnable parameters and embeddings and $F$ maps nodes into the $d$-dimensional Hamming space. 
Given two nodes in the bipartite graph, e.g., $x\in \mathcal{V}_1$ and $y \in \mathcal{V}_2$, their hash codes are $\mathcal{Q}_x$ and $\mathcal{Q}_y$.
Then the probability of edge existence $\widehat{\emb{Y}}_{x,y}$ between nodes $x\in \mathcal{V}_1$ and $y \in \mathcal{V}_2$ can be effectively and efficiently measured by the hash codes $\mathcal{Q}_x$ and  $\mathcal{Q}_y$, i.e., $\widehat{\emb{Y}}_{x,y}$ = $f(\mathcal{Q}_x, \mathcal{Q}_y)$ where $f$ is a score function.
Intuitively, the larger value $\widehat{\emb{Y}}_{x,y}$ is, the more likely $x$ and $y$ are matched, i.e., an edge between $x$ and $y$ exists.
We use bold uppercase and calligraphy characters for matrices and sets. The non-bolded denote graph nodes or scalars. Explanations of key notations used in this paper are attached in Table~\ref{tab:notation}.

\begin{table}[t]
\newfig
\caption{\newfig Notations and meanings.}
\label{tab:notation}
  \footnotesize
  \begin{tabular}{c|l} 
     \hline
          {\bf Notation} & {\bf Meaning}\\
     \hline\hline
          {\notsotiny $\mathcal{G},\mathcal{V}_1$, $\mathcal{V}_2$, $\mathcal{E}$} & Bipartite graph with node and edge sets.\\
    \hline
        {$\emb{A}$}, $\emb{D}$    & {Adjacency and diagonal degree matrices.}  \\
    \hline
         \tabincell{l}{$\emb{Y}$}   & \tabincell{l}{{Edge transactions where 1 indicates the} \\ interaction existence, and 0 otherwise.} \\
    \hline
        {$\widehat{\emb{Y}}$} & {Estimated matching scores.} \\
    \hline
      {$d$}  & Hash code dimension.\\
    \hline
      {$L$} &  {Numbers of convolutional hashing.}\\
    \hline
        {$\emb{V}_x^{(l)}$}  & {Node $x$'s intermediate feature at iteration $l$.} \\
    \hline
        {$\emb{Q}_x^{(l)}$}   &  {Binary embedding of node $x$ at iteration $l$.} \\
    \hline
        $\alpha^{(l)}$  & $x$'s rescaling factor computed at iteration $l$.\\
    \hline
    ${\emb\epsilon^{(l)}_x}'$, ${\emb\epsilon^{(l)}_x}''$, ${\varrho_{x}^{(l)}}$, ${\varrho_{x}^{(l)}}''$ &  Perturbation noises. \\ 
    \hline
      {${\emb{V}_x^{(l)}}'$}, {${\emb{V}_x^{(l)}}''$}   & Augmented continuous features of $x$.\\
    \hline
      {${\mathcal{Q}_x^{(l)}}'$}, {${\mathcal{Q}_x^{(l)}}''$}   & Augmented hash codes of $x$.\\
    \hline
       {${\alpha_x^{(l)}}'{\emb{Q}_x^{(l)}}$}  & Augmented binarized embeddings of $x$.\\
    \hline
        {$\mathcal{Q}_x$}  &   {Final output hash codes of node $x$.} \\
    \hline
        $\mathcal{L}_{cl}^1$, $\mathcal{L}_{cl}^2$ & {Dual contrastive loss terms.} \\ 
    \hline
        $\mathcal{L}_{cl}$, $\mathcal{L}_{bpr}$, $\mathcal{L}$ & {Two loss terms of final objective $\mathcal{L}$.} \\
    \hline
      {$\eta$, $\tau$, $\sigma$, $H$, $n$, $\lambda_1$, $\lambda_2$}  & Hyperparameters.\\
    \hline
  \end{tabular}
\end{table}

\section{{\model: Methodology}}
\label{sec:method}
\begin{figure*}[tp]
\begin{minipage}{1\textwidth}
\includegraphics[width=7.1in]{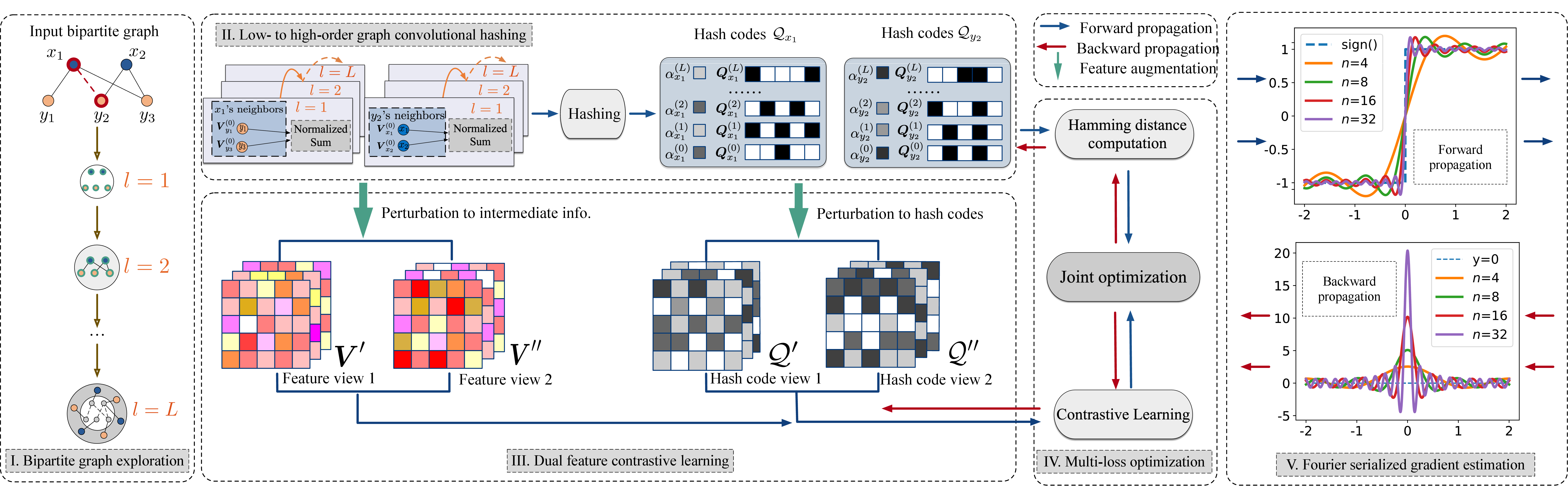}
\end{minipage} 
\caption{\newfig Workflow illustration of \model~framework (best view in color).}
\label{fig:model}
\end{figure*}

\subsection{\textbf{Overview}}
We hereby formally introduce our \model~model, which incorporates the following key modules:
(1) \textit{adaptive graph convolutional hashing} (\cref{sec:hashing}) provides an effective encoding approach to enhance the expressivity of hashed features significantly while ensuring efficient matching in Hamming space;
(2) \textit{dual feature augmentation for contrastive learning} (\cref{sec:cl}) constructs effective data augmentations for both intermediate continuous information and target hash codes in dual manner, striving to improve the quality and discriminability of the ouputs.
(3) \textit{Fourier serialized gradient estimation} (\cref{sec:ge}) introduces the Fourier Series decomposition for $\sign(\cdot)$ in the frequency domain. By leveraging this technique, more accurate gradient approximation can be achieved.
Incorporating the learned hash codes, \model~also incorporates efficient online matching using the Hamming distance measurement (\cref{sec:score}). 
Our model illustration is attached in Figure~\ref{fig:model}.

\subsection{\textbf{Adaptive Graph Convolutional Hashing}}
\label{sec:hashing}

To enhance expressivity and smooth loss landscapes, one approach is to incorporate the \textit{relaxation strategy}.
Apart from the topology-aware embedding hashing with $\sign(\cdot)$:
\begin{sequation}
\label{eq:hashing}
\emb{Q}_{x}^{(l)} = \sign({\emb{V}}_{x}^{(l)}),
\end{sequation}%
we introduce an effective way to increase the flexibility and smoothness of the learned representations, via additionally computing a layer-wise positive rescaling factor for each node as follows, such that $\alpha_x^{(l)} \in \mathbb{R}^+$ and ${\emb{V}}^{(l)}_x \approx$ $\alpha_x^{(l)} \emb{Q}^{(l)}_x$:
\begin{sequation}
\label{eq:rescale}
\alpha_x^{(l)} = \frac{1}{d} ||{\emb{{V}}}_x^{(l)}||_1.
\end{sequation}%
Instead of treating these factors as learnable parameters, such deterministic computation substantially prunes the parameter search space while attaining the adaptive approximation functionality for different graph nodes. 
We demonstrate this in~\cref{sec:ablation} of experiments.

After $L$ iterations of hashing, we obtain the table of \textbf{adaptive hash codes} $\mathcal{Q} = \{\emb{\alpha}, \emb{Q}\}$, where $\emb{\alpha} \in \mathbb{R}^{(|\mathcal{V}_1|+|\mathcal{V}_2|)\times (L+1)}$ and $\emb{Q} \in \mathbb{R}^{(|\mathcal{V}_1|+|\mathcal{V}_2|)\times (L+1) \times d}$.
For each node $x$, its corresponding hash codes are indexed and assembled:
\begin{sequation}
\resizebox{1\linewidth}{!}{$
\displaystyle
\emb{\alpha}_x = \alpha_x^{(0)} || \alpha_x^{(1)} || \cdots || \alpha_x^{(L)}, \text{ and } \emb{{Q}}_x = \emb{Q}_x^{(0)} || \emb{Q}_x^{(1)} || \cdots || \emb{Q}_x^{(L)}.
$}
\end{sequation}%
Intuitively, table $\mathcal{Q}$ encodes bipartite structural information that is propagated back and forth at different iteration steps $l$, i.e., from $0$ to the maximum step $L$.
It not only tracks the intermediate knowledge hashed for all graph nodes, but also maintains the value approximation to their original continuous embeddings, e.g., {\footnotesize ${\emb{{V}}}_x^{(l)}$}.
In addition, with the slightly more space cost (complexity analysis in~\cref{sec:complexity}), such detached hash encoding approach still supports the bitwise operations (\cref{sec:score}) for accelerating inference and matching.

{\cyk
\subsection{\textbf{Dual Feature Contrastive Learning}}
\label{sec:cl}
\subsubsection{\textbf{Dual Feature Augmentation}}
Diverging from conventional methods that manipulate graph structures, such as dropout of edges and nodes~\cite{wu2021self, you2020graph, zhu2021graph}, which often prove to be intractable and time-consuming, our research focuses on exploring contrastive learning directly on features within the embedding spaces.
Specifically, we introduce a procedure involving the addition of random noises~\cite{yu2022graph,yu2023xsimgcl} to both the embeddings before and after hashing.
Given the node $x$ with the segments of its \textit{continuous embedding $\emb{V}^{(l)}_x$} and \textit{binarized hash code $\alpha_{x}^{(l)}\emb{Q}^{(l)}_x$} at the $l$-th layer of graph convolution, we achieve the feature-level augmentations as follows.
Firstly, we conduct the augmentation for continuous embedding $\emb{V}^{(l)}_x$ as:
\begin{sequation}
\label{eq:cl}
{\emb{V}^{(l)}_x}'=\emb{V}^{(l)}_x + {\emb\epsilon^{(l)}_x}', \quad {\emb{V}_x^{(l)}}'' = {\emb{V}_x^{(l)}} + {\emb\epsilon^{(l)}_x}''.
\end{sequation}%
These perturbation vectors ${\emb\epsilon^{(l)}_x}'$, ${\emb\epsilon^{(l)}_x}''$$\in$$\mathbb{R}^d$ are drawn by: 
\begin{sequation}
\label{eq:constrain}
 \|\emb\epsilon^{(l)}\|_2 = \tau, \quad \emb\epsilon^{(l)} = \emb{\overline{\epsilon^{(l)}}} \odot \emb{Q}_x^{(l)},
\end{sequation}%
where $\emb{\overline{\epsilon^{(l)}}}$ $\in$ $\mathbb{R}^d \sim U(0,1)$ and $\odot$ is the Hadamard product.
The hyperparameter $\tau$ controls the \textit{embedding uniformity} which will be empirically analyzed later in \cref{sec:exp_cl}.
In Equation~(\ref{eq:constrain}), the first constraint mainly shapes the magnitude of perturbation vector, e.g., ${\emb\epsilon^{(l)}_x}'$, and it numerically corresponds to the point on a hypersphere with the radius $\tau$.
The second constraint ensures that $\emb{V}_x^{(l)}$, ${\emb\epsilon^{(l)}_x}'$, and ${\emb\epsilon^{(l)}_x}''$ are located in the same region.
This requirement constrains them to have the same direction with $\emb{Q}_x^{(l)}$ and prevents the addition of noises from causing significant deviations in $\emb{V}_x^{(l)}$. We visually depict the process in Figure~\ref{fig:cl}(a).

\begin{figure}[tp]
\begin{minipage}{0.5\textwidth}
\includegraphics[width=3.4in]{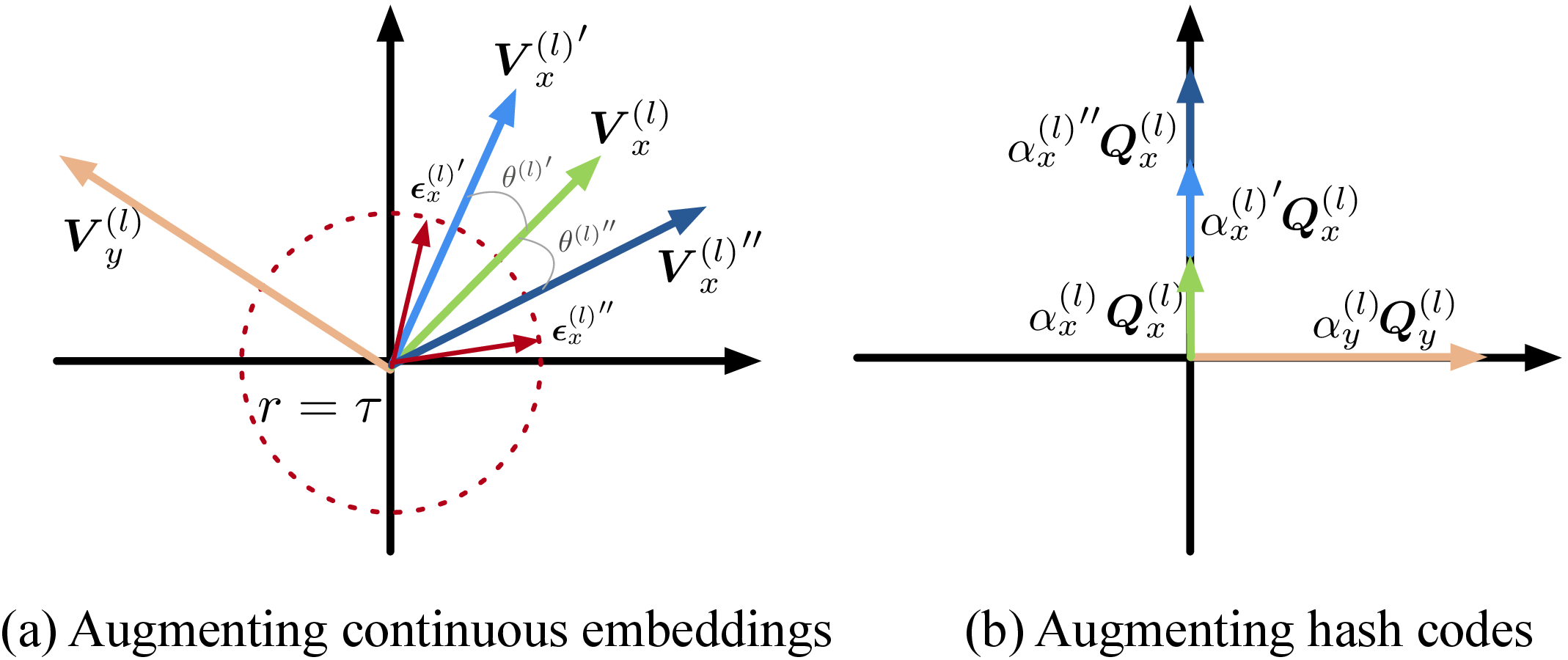}
\end{minipage} 
\caption{\newfig Dual feature augmentation for contrastive learning.}
\label{fig:cl}
\end{figure}

By incorporating these scaled noise vectors to $\emb{V}_x^{(l)}$, we essentially rotate them by two small angles, denoted as ${\theta^{(l)}}'$ and ${\theta^{(l)}}''$. 
Each rotation corresponds to a modification in $\emb{V}_x^{(l)}$ and produces a pair of augmented representations, i.e., ${\emb{V}^{(l)}_x}'$ and ${\emb{V}_x^{(l)}} ''$.
Due to the small rotation magnitude, the augmented embeddings capture both the essential features of the original information and introduce slight and acceptable variations.

Likewise, we implement the feature augmentation on the output hash codes. 
However, due to the discreteness of Hamming space, it would introduce significant variations to the binary embeddings, i.e., $\emb{Q}^{(l)}_x$, if the noise is simply perturbed on it. 
Therefore, as illustrated in Figure~\ref{fig:cl}(b), we opt to add noise to the scalar value $\alpha_{x}^{(l)}$ as follows:
\begin{sequation}
\label{eq:scalercl}
{\alpha_{x}^{(l)}}' = \alpha_{x}^{(l)} + {\varrho_{x}^{(l)}}', \quad {\alpha_{x}^{(l)}}'' = \alpha_{x}^{(l)} + {\varrho_{x}^{(l)}}'',
\end{sequation}%
where $\varrho_{x}^{(l)}$ $\in$ $\mathbb{R}$ $\sim$ $U(0,1)$.
Instead of directly perturbing $\emb{Q}^{(l)}_x$, this approach allows us to introduce controlled perturbations without affecting the binarization nature of hashing.

\subsubsection{\textbf{Dual Feature Contrastive Learning Objectives}}
After iterative graph convolutions, we achieve two views of feature-based augmentations.
For continuous intermediate information, we have the contrastive optimization term as:
\begin{sequation}
\label{eq:cl_loss1}
\mathcal{L}_{cl}^1 =\sum_{x \in \mathcal{B}}-\log \frac{\exp ( \frac{{\emb{V}_x'}^{\mathsf{T}} {\emb{V}_x''}}{\sigma}) }{\sum_{y  \in \mathcal{B}} \exp (\frac{{\emb{V}_x'}^{\mathsf{T}} {\emb{V}_y''}}{\sigma})},
\end{sequation}%
where $\mathcal{B}$ denotes a training batch and the hyperparameter $\sigma > 0$. 
${\emb{V}_x'}$ and ${\emb{V}_x''}$ are concatenated representations from:
\begin{sequation}
\resizebox{1\linewidth}{!}{$
\displaystyle
{\emb{V}_x}'={\emb{V}_x^{(0)}}'\|{\emb{V}_x^{(1)}}'\| \cdots \| {\emb{V}_x^{(L)}}', {\emb{V}_x}''={\emb{V}_x^{(0)}}''\|{\emb{V}_x^{(1)}}''\| \cdots \| {\emb{V}_x^{(L)}}''.
$}
\end{sequation}%
On the other hand, for the hash code $\mathcal{Q}_x$, we can similarly obtain two augmented views $\mathcal{Q}_x'$ and $\mathcal{Q}_x''$ and have: 
\begin{sequation}
\label{eq:cl_loss2}
\mathcal{L}_{cl}^2 =\sum_{x \in \mathcal{B}}-\log \frac{\exp(\frac{\alpha_x'\alpha_x''}{\sigma} \|{\emb{Q}_x}\|^2 )}{\sum_{y  \in \mathcal{B}} \exp (\frac{\alpha_x'\alpha_y''}{\sigma} \emb{Q}_x^{\mathsf{T}} {\emb{Q}_y})},
\end{sequation}%
where $\|{\emb{Q}_x}\|^2$ and $\emb{Q}_x^{\mathsf{T}} {\emb{Q}_y}$ can be efficiently computed in Hammming space (introduced later in~\cref{sec:score}).

Generally, these two contrastive loss terms encourage consistency between the augmented representations of the same node $x$, e.g., $\mathcal{Q}_x'$ and $\mathcal{Q}_x''$, while maximizing the disagreement between embeddings of different nodes, e.g., $x$ and $y$.
And our proposed dual contrastive feature objective can further strengthen the optmization effect to both \textit{continuous intermediate information} as well as the \textit{target hash codes}, eventually producing a fine-grained learning of hash codes with good ``uniformity''.
We provide a detailed empirical study to this design in \cref{sec:exp_cl}.

}

\subsection{\textbf{Fourier Serialized Gradient Estimation}}
\label{sec:ge}
To provide the accordant gradient estimation for hash function $\sign(\cdot)$, we approximate it by utilizing its Fourier Series decomposition in the frequency domain. 
Specifically, $\sign(\cdot)$ can be regarded as a special instance of the periodical Square Wave Function $t(x)$ within the interval of length $2H$, i.e., $\sign(\phi) = t(\phi)$, $|\phi| < H$.  
By decomposing $t(x)$ into its Fourier Series form, we shall have: 
\begin{sequation}
\sign(\phi) = \frac{4}{\pi}\sum_{i=1,3,5,\cdots}^{+\infty}\frac{1}{i}\sin(\frac{\pi i\phi}{H}), {\rm \ \ where \ \ } |\phi| < H.
\end{sequation}%

Fourier Series decomposition of $\sign(\cdot)$ with infinite terms is a lossless transformation~\cite{rust2013convergence}.
Therefore, as depicted in Figure~\ref{fig:model}(c), we can set the finite expanding term $n$ to obtain its approximation version as follows: 
\begin{sequation}
{\sign(\phi)} \doteq \frac{4}{\pi}\sum_{i=1,3,5,\cdots}^{n}\frac{1}{i}\sin(\frac{\pi i\phi}{H}).  \\
\end{sequation}%
The corresponding derivatives can be obtained as:
\begin{sequation}
\label{eq:gradient}
\frac{\partial{{\sign(\phi)}}}{\partial \phi}   \doteq \frac{4}{H} \sum_{i=1,3,5,\cdots}^{n} \cos(\frac{\pi i\phi}{H}). 
\end{sequation}%

Unlike other gradient estimators such as tanh-alike~\cite{gong2019differentiable, qin2020forward} and SignSwish~\cite{darabi2018bnn}, approximating the $\sign(\cdot)$ function with its Fourier Series does not distort the main direction of the true gradients during model optimization~\cite{xu2021learning}. 
This characteristic is advantageous as it facilitates a coordinated transformation from continuous values to their corresponding binarizations for node representations.
Consequently, it effectively preserves the discriminability of hash codes and leads to improved retrieval accuracy.
We present this performance comparison in~\cref{sec:fs_exp} of experiments. 
In summary, as indicated in Equation~(\ref{eq:formal_grad}), we utilize the strict $\sign(\cdot)$ during forward propagation to encode hashing embeddings, while estimating the gradients $\frac{\partial\sign(\phi)}{\partial \phi}$ for backward propagation.
\begin{sequation}
\label{eq:formal_grad}
\left\{ 
\begin{aligned}
& \boldsymbol{Q}^{(l)} = \sign(\phi),  &\text{Forward propagation.} \\
& \frac{\partial \boldsymbol{Q}^{(l)}}{\partial \phi} \doteq \frac{4}{H} \sum_{i=1,3,5,\cdots}^{n} \cos(\frac{\pi i\phi}{H}). & \text{Backward propagation.}
\end{aligned}
\right.
\end{sequation}%

\subsection{\textbf{Model Prediction and Optimization}}
\label{sec:score}

\subsubsection{\textbf{Matching score prediction}}
\label{sec:score_computation}
Given two nodes $x \in \mathcal{V}_1$ and $y \in \mathcal{V}_2$, one natural manner to implement the score function is \textit{inner-product}, mainly for its simplicity as:
\begin{sequation}
\label{eq:inner_score}
\widehat{\emb{Y}}_{x,y} =  (\alpha_x\emb{Q}_x)^\mathsf{T} \cdot (\alpha_y\emb{Q}_y).
\end{sequation}%
However, the inner product in Equation~(\ref{eq:inner_score}) is still conducted in the (continuous) Euclidean space with \textit{full-precision arithmetics}.
To bridge the connection between the inner product and Hamming distance measurement, we introduce the theorem:

\begin{thm}[\textbf{Hamming Distance Matching}]
\label{tm:equal}
Given two hash codes, we have $(\alpha_x\emb{Q}_x)^\mathsf{T} \cdot (\alpha_y\emb{Q}_y)$ $=$ $\alpha_x\alpha_y$ $(d - 2D_{H}(\emb{Q}_x, \emb{Q}_y))$.
\end{thm}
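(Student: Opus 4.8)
The plan is to reduce the claim to an elementary counting identity on the $\{-1,+1\}$ entries of the binary codes. First I would pull the positive scalars out of the inner product, writing $(\alpha_x\emb{Q}_x)^{\mathsf{T}}(\alpha_y\emb{Q}_y) = \alpha_x\alpha_y\,\emb{Q}_x^{\mathsf{T}}\emb{Q}_y$, so that it suffices to establish the scalar-free identity $\emb{Q}_x^{\mathsf{T}}\emb{Q}_y = d - 2D_{H}(\emb{Q}_x,\emb{Q}_y)$ and then reattach the factor $\alpha_x\alpha_y$ at the end.

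Next I would exploit the fact that each code is produced by $\sign(\cdot)$ and hence lies in $\{-1,+1\}^d$, as fixed by the definition $\emb{Q}_x^{(l)} = \sign(\emb{V}_x^{(l)})$ in Equation~(\ref{eq:hashing}). For each coordinate $i$, the product $Q_{x,i}Q_{y,i}$ equals $+1$ exactly when the two codes agree and $-1$ exactly when they disagree. Partitioning the $d$ coordinates into the set of agreeing positions (of size $a$) and the set of disagreeing positions (of size $b$), the inner product collapses to $\emb{Q}_x^{\mathsf{T}}\emb{Q}_y = a - b$.

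Then I would bring in the Hamming distance, which by definition counts precisely the disagreeing coordinates, so $D_{H}(\emb{Q}_x,\emb{Q}_y) = b$. Combining this with the constraint $a + b = d$ gives $a = d - D_{H}$ and $b = D_{H}$, whence $a - b = (d - D_{H}) - D_{H} = d - 2D_{H}$. Substituting this back into $\emb{Q}_x^{\mathsf{T}}\emb{Q}_y = a-b$ and multiplying through by $\alpha_x\alpha_y$ yields the stated identity.

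The calculation is routine; there is no real analytic obstacle. The only point demanding care is the encoding convention: the leading $d$ and the factor of $2$ both hinge on the codes being $\{-1,+1\}$-valued rather than $\{0,1\}$-valued, since it is the bipolar encoding that makes each coordinate contribute $\pm 1$ to the inner product. I would therefore state this convention explicitly at the outset. With it in place, the identity is immediate, and it delivers the key practical payoff: the full-precision inner product can be recovered exactly from a bitwise Hamming-distance computation rescaled by $\alpha_x\alpha_y$, which is what enables the efficient Hamming-space matching described in~\cref{sec:score}.
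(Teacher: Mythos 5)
Your proposal is correct and follows essentially the same argument as the paper's proof: both reduce the claim to the scalar-free identity $\emb{Q}_x^{\mathsf{T}}\emb{Q}_y = d - 2D_{H}(\emb{Q}_x,\emb{Q}_y)$ and establish it by counting agreeing versus disagreeing coordinates of the $\{-1,+1\}$-valued codes. Your version is slightly more explicit about factoring out $\alpha_x\alpha_y$ and about the bipolar encoding convention, both of which the paper leaves implicit, but the underlying counting argument is identical.
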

\begin{proof}
\begin{sequation}
\begin{aligned}
&\emb{Q}_x^\mathsf{T} \cdot \emb{Q}_y \\ 
&= \big|\{ i|(\emb{Q}_{x})_i = (\emb{Q}_{y})_i = 1\}\big| +  \big|\{ i|(\emb{Q}_{x})_i = (\emb{Q}_{y})_i = -1\}\big| \\ 
& -  \big|\{ i|(\emb{Q}_{x})_i \neq (\emb{Q}_{y})_i\}\big|\\
& = d - 2 \cdot \big|\{ i|(\emb{Q}_{x})_i \neq (\emb{Q}_{y})_i\}\big|  = \underline{d - 2D_{H}(\emb{Q}_x, \emb{Q}_y))},\\
\end{aligned}
\end{sequation}%
which completes the proof.
\end{proof}

$D_H(\cdot, \cdot)$ is Hamming distance between two inputs.
By applying Theorem~\ref{tm:equal}, we transform the score computation to the Hamming distance matching. 
This transformation allows to significantly reduce the number of floating-point operations (\#FLOPs) in the original score computation formulation (Equation~(\ref{eq:inner_score})) to efficient Hamming distance matching.
Consequently, this can develop substantial computation acceleration, as further analyzed in~\cref{sec:complexity}.

\subsubsection{\textbf{Multi-loss Objective Function}}
Our objective function comprises two components, i.e., BPR loss $\mathcal{L}_{bpr}$ and contrastive learning loss $\mathcal{L}_{cl}$. 
Generally, these two loss functions harness the regularization effect to each other.
The rationale behind this design is: 
\begin{itemize}[leftmargin=*]
\item $\mathcal{L}_{bpr}$ ranks the matching scores computed from the hash codes of a given pair of graph nodes.

\item {\cyk$\mathcal{L}_{cl}$ measures the consistency of nodes under different augmentation views represented by both continuous and binarized hash codes.}
\end{itemize}
Concretely, we implement $\mathcal{L}_{bpr}$ with \textit{Bayesian Personalized Ranking} (BPR) loss as follows:  
\begin{sequation}
\label{eq:hd-bpr}
\mathcal{L}_{bpr} = -\sum_{x \in \mathcal{V}_1} \sum_{\tiny y\in \mathcal{N}(x) \atop y'\notin \mathcal{N}(x)} \ln \sigma(\widehat{\emb{Y}}_{x,y} - \widehat{\emb{Y}}_{x,y'}).
\end{sequation}%
$\mathcal{L}_{bpr}$ encourages the predicted score of an observed edge to be higher than its unobserved counterparts~\cite{lightgcn}.
{\cyk As for $\mathcal{L}_{cl}$, we take summation of both loss terms, i.e., $\mathcal{L}_{cl}^1$ and $\mathcal{L}_{cl}^2$ introduced in \cref{sec:cl}, for dual feature contrastive learning:
\begin{sequation}
\mathcal{L}_{cl} = \mathcal{L}_{cl}^1 + \mathcal{L}_{cl}^2.
\end{sequation}%
Let $\lambda_1$ be a hyperparameter and $\Theta$ denote the set of trainable embeddings regularized by the parameter $\lambda_2$ to avoid over-fitting.
Our final objective function is defined as:
\begin{sequation}
\label{eq:L}
\mathcal{L} = \mathcal{L}_{bpr} + \lambda_1\mathcal{L}_{cl} + \lambda_2 ||\Theta||_2^2.
\end{sequation}%
While $\mathcal{L}_{bpr}$ focuses on pairwise preferences and rankings, the dual contrastive learning loss helps in capturing fine-grained similarities and differences between node representations. 
By incorporating these loss terms, the model not only prioritizes the learning of ranking information embedded in the outputs, but also produces high-quality hash codes with better performances.
The ablation study of such multi-loss optimization design is conducted in~\cref{sec:ablation}.
}

So far, we have introduced all technical parts of \model~and attached the pseudocodes in Algorithm~\ref{alg:model}. 
To better understand the model scalability of \model, we provide the complexity analysis in the following section.

\begin{algorithm}[t]
\small
{\newfig
\caption{\model~training algorithm.}
\label{alg:model}
\LinesNumbered  
\While{\rm{model does not converge}}{
	 \For{$l = 0, \cdots, L-1$}{
          $\emb{{Q}}^{(l+1)} \gets \sign\big({\emb{V}}^{(l+1)})$ \Comment*[r]{Eq.(\ref{eq:hashing})}
          $\emb{\alpha} \gets$ calculate the rescaling factors \Comment*[r]{Eq.(\ref{eq:rescale})}

           \For{$x \in \mathcal{V}_1, y \in \mathcal{N}(x)$}{
              
              ${\emb{V}_x^{(l+1)}}'$, ${\emb{V}_x^{(l+1)}}''$, ${\emb{V}_y^{(l+1)}}''$, \\ 
              ${\alpha_x^{(l+1)}}$, ${\alpha_x^{(l+1)}}''$, ${\alpha_y^{(l+1)}}''$  $\gets$ construct dual feature augmentation \Comment*[r]{Eq.(\ref{eq:cl}-\ref{eq:scalercl})}
               $\mathcal{L}_{cl}^1$, $\mathcal{L}_{cl}^2$ $\gets$ contrastive loss \Comment*[r]{Eq.(\ref{eq:cl_loss1}-\ref{eq:cl_loss2})}
              $\widehat{\emb{Y}}_{x,y} \gets$ {\scriptsize $\alpha_x\alpha_y$$(d - 2D_{H}(\emb{Q}_x, \emb{Q}_y))$} \Comment*[r]{Eq.(\ref{eq:inner_score})}
              $\mathcal{L}_{bpr}$ $\gets$ BPR ranking loss  \Comment*[r]{Eq.(\ref{eq:hd-bpr})}
            }

        }
     
      $\mathcal{L} \gets$ accumulate loss for optimization \Comment*[r]{Eq.(\ref{eq:L})} 
      
}
\textbf{Function} \tt{Gradient\_estimator}($\mathcal{L}$): \\
$\frac{\partial \mathcal{L}}{\partial \emb{V}} \gets \frac{\partial \mathcal{L}}{\partial \emb{Q}} \cdot \frac{4}{H} \sum_{i=1,3,5,\cdots}^{n} \cos(\frac{\pi i \emb{V}}{H})$ \Comment*[r]{Eq.(\ref{eq:gradient})}
}
\end{algorithm}

\section{{Experimental Evaluation}}
\label{sec:exp}
We evaluate \model~with the aim of answering the following research questions:
\begin{itemize}[leftmargin=*]
\item \textbf{RQ1.} How does \model~compare to the state-of-the-art hashing-based models in Top-N Hamming retrieval?

\item \textbf{RQ2.} What is the performance gap between \model~and full-precision models \textit{w.r.t.} long-list retrieval quality?

\item {\cyk \textbf{RQ3.} How does our proposed dual feature contrastive learning contribute to the performance of \model?}

\item \textbf{RQ4.} What are advantages of other model components?

\item \textbf{RQ5.} What is the resource consumption of \model?

\item \textbf{RQ6.} How does Fourier Series decomposition perform in terms of retrieval accuracy and training efficiency?
\end{itemize}

\subsection{\textbf{Experiment Setup}}
\label{sec:exp_setup}

\textbf{Datasets and Evaluation Metrics.} 
We include six real-world bipartite graphs in Table~\ref{tab:datasets} that are widely evaluated~\cite{lightgcn,chen2021hyper,chen2021attentive,yang2022hrcf,ngcf,zhang2022knowledge} as follows:
\begin{enumerate}[leftmargin=*]
\item \textbf{MovieLens}\footnote{\notsotiny\url{https://grouplens.org/datasets/movielens/1m/}} is a widely adopted benchmark for movie review. Similar to the setting in~\cite{hashgnn}, if the user $x$ has rated item $y$, we set the edge $\emb{Y}_{x,y} = 1$, 0 otherwise. 
\item \textbf{Gowalla}\footnote{\notsotiny\url{https://github.com/gusye1234/LightGCN-PyTorch/tree/master/data/gowalla}}~\cite{ngcf,hashgnn,lightgcn,dgcf} is the dataset~\cite{liang2016modeling} between \textit{customers} and \textit{check-in locations} collected from Gowalla. 
\item \textbf{Pinterest}\footnote{\notsotiny\url{https://sites.google.com/site/xueatalphabeta/dataset-1/pinterest_iccv}} is an open dataset for image recommendation between \textit{users} and \textit{images}.
Edges represent the pins over images initiated by users. 
\item \textbf{Yelp2018}\footnote{\notsotiny\url{https://github.com/gusye1234/LightGCN-PyTorch/tree/master/data/yelp2018}} is from Yelp Challenge 2018 Edition, bipartitely modeling between \textit{users} and \textit{local businesses}.
\item \textbf{AMZ-Book}\footnote{\notsotiny\url{https://github.com/gusye1234/LightGCN-PyTorch/tree/master/data/amazon-book}} is the bipartite graph between \textit{readers} and \textit{books}, organized from Amazon-review~\cite{he2016ups}.  
\item \textbf{Dianping}\footnote{\notsotiny\url{https://www.dianping.com/}} is a commercial dataset between \textit{users} and \textit{local businesses} recording their diverse interactions, e.g., clicking, saving, and purchasing. 
\end{enumerate}

\begin{table}[t]
\centering
\footnotesize
\caption{The statistics of datasets.}
\label{tab:datasets}
\setlength{\tabcolsep}{0.2mm}{
\begin{tabular}{c | c | c | c | c | c | c}
\toprule 
             & {\footnotesize MovieLens}  & {\footnotesize Gowalla}   & {\footnotesize Pinterest}  &  {\footnotesize Yelp2018} & {\footnotesize AMZ-Book} & {\footnotesize Dianping}\\
\midrule[0.1pt]
    {\footnotesize $|\mathcal{V}_1|$ }  & {6,040}   & {29,858}   & {55,186}   & {31,668}  &{52,643}  &{332,295}  \\ 
    {\footnotesize $|\mathcal{V}_2|$ }  & {3,952}   & {40,981}   & {9,916}    & {38,048}  &{91,599}  &{1,362}  \\
\midrule[0.1pt]
    {\footnotesize $|\mathcal{E}|$ } & {1,000,209} & {1,027,370} & {1,463,556} & {1,561,406} & {2,984,108} &{10,000,014} \\
   \midrule[0.1pt]
 Density  & {0.04190}   & {0.00084}   & {0.00267}   & {0.00130}  &{0.00062}  &{0.02210}  \\
\bottomrule
\end{tabular}}
\end{table}

\begin{table}[tp]
\centering
\footnotesize
\newfig
\caption{\newfig Hyperparameter settings.}
\label{tab:hyperparameter}
\setlength{\tabcolsep}{0.9mm}{
\begin{tabular}{c | c | c | c | c | c | c }
\toprule
            & MovieLens & Gowalla & Pinterest & Yelp2018 & AMZ-Book & Dianping \\
\midrule[0.1pt]
  $\eta$            &{$1\times 10^{-2}$}  &{$1\times 10^{-3}$}  &{$1\times 10^{-3}$}  &{$1\times 10^{-3}$}  &{$1\times 10^{-3}$}  &{$1\times 10^{-4}$}  \\
  $\lambda_1$     &{$1\times 10^{-2}$}  &{$5\times 10^{-2}$}  &{$1\times 10^{-2}$}  &{$1\times 10^{-2}$}  &{$1\times 10^{-2}$}  &{$5\times 10^{-3}$}  \\
  $\lambda_2$     &{$1\times 10^{-5}$}  &{$1\times 10^{-5}$}  &{$1\times 10^{-4}$}  &{$1\times 10^{-4}$}  &{$1\times 10^{-5}$}  &{$1\times 10^{-5}$}  \\
  $\tau$      &{0.1}  &{0.1}  &{0.1}  &{0.1}  &{0.1}  &{0.1}  \\
  $\sigma$      &{0.2}  &{0.2}  &{0.2}  &{0.2}  &{0.1}  &{0.2}  \\
  $n$     &{8}  &{16}  &{8}  &{4}  &{8}  &{4}  \\
  $H$     &{1}  &{1}  &{1}  &{1}  &{1}  &{1}  \\
\bottomrule
\end{tabular}
}
\end{table}

\textbf{Evaluation Metrics.}
To assess the model performance in Hamming space retrieval over bipartite graphs, we use the hash codes to find the Top-N answers for a given query node based on the closest Hamming distances. We then evaluate the ranking capability using two commonly used evaluation protocols, i.e., Recall@N and NDCG@N.

\textbf{Baselines.}
\label{sec:baseline}
In addition to BGCH, we include the following representative hashing-based models for (1) general object retrieval (LSH~\cite{lsh}), (2) image search (HashNet~\cite{hashnet}), and (3) Top-N candidate generation for recommendation (Hash\_Gumbel~\cite{gumbel1,gumbel2}, CIGAR~\cite{kang2019candidate} and HashGNN~\cite{hashgnn}).
We also include several state-of-the-art full-precision (i.e., denoted by FT32 as implemented with float32 in experiments) recommender models, i.e., NeurCF~\cite{neurcf}, NGCF~\cite{ngcf}, DGCF~\cite{dgcf}, LightGCN~\cite{lightgcn}, for the long-list ranking quality comparison:
\label{app:baselines}
\begin{enumerate}[leftmargin=*]
\item \textbf{LSH}~\cite{lsh} is a classical hashing method. LSH is proposed to approximate the similarity search for massive high-dimensional data and we introduce it for Top-N object search by following the adaptation in~\cite{hashgnn}. 

\item \textbf{HashNet}~\cite{hashnet} is a representative deep hashing method that is originally proposed for multimedia retrieval tasks.
Similar to~\cite{hashgnn}, we adapt it for graph data by modifying it with the general graph convolutional network.

\item \textbf{CIGAR}~\cite{kang2019candidate} is a state-of-the-art neural-network-based framework for fast Top-N candidate generation in recommendation. 
CIGAR can be further followed by a full-precision re-ranking algorithm. And we only use its hashing part for fair comparison.

\item \textbf{Hash\_Gumbel} is a variance of \model~with Gumbel-softmax for hash encoding and gradient estimation~\cite{gumbel1,gumbel2}.
Specifically, we first expand each embedding bit to a size-two one-hot encoding. 
Then it utilizes the Gumbel-softmax trick to replace $\sign(\cdot)$ as relaxation for binary hash code generation. 

\item \textbf{HashGNN}~\cite{hashgnn} is the state-of-the-art learning to hash based method with GCN framework. 
We use HashGNN$_{h}$ to denote the vanilla version with \textit{hard encoding} proposed in~\cite{hashgnn}, where each element of user-item embeddings is strictly binarized. 
We use HashGNN$_{s}$ to denote its proposed approximated version.

\item \textbf{BGCH}~\cite{chen2023bipartite} is the vanilla graph-base method with no feature augmentations for bipartitle graph hashing.

\item \textbf{NeurCF}~\cite{neurcf} is one representative deep neural network model for collaborative filtering in recommendation. 

\item \textbf{NGCF}~\cite{ngcf} is one of the representative graph-based models with collaborative filtering methodology. 

\item \textbf{DGCF}~\cite{dgcf} is a graph-based model that learns disentangled user intents with state-of-the-art performance. 

\item \textbf{LightGCN}~\cite{lightgcn} is another state-of-the-art GCN-based recommender model that has been widely evaluated. 

\end{enumerate}

The early hashing methods such as SH~\cite{weiss2008spectral}, RMMH~\cite{joly2011random}, and LCH~\cite{zhang2010laplacian} are not considered in our evaluation due to the established performance superiority of the competing models~\cite{hashnet,kang2019candidate} over them.

{\cyk
{\textbf{Implementations and Hyperparameter Settings.}}
We implemented our models using Python 3.6 and PyTorch 1.14.0 on a Linux machine equipped with 4 Nvidia V100 GPUs and 4 Intel Core i7-8700 CPUs.
For the baselines, we followed the official hyperparameter settings or conducted a grid search if the settings were not available.
In the following sections, we set the dimension $d$ to 256 and graph convolutions $L$ to 2.
The learning rate $\eta$ and the coefficients $\lambda_1$, $\lambda_2$ were tuned among \{$5\times10^{-5}, 10^{-5}, 5\times10^{-4}, 10^{-4}, 5\times10^{-3}, 10^{-3}, 5\times10^{-2}, 10^{-2}$\}. 
We initialize and optimize all models with default normal initializer and Adam optimizer~\cite{adam}. 
All hyperparameters are reported in Table~\ref{tab:hyperparameter}.
}

\begin{table*}[t]
\setlength{\abovecaptionskip}{0.2cm}
\setlength{\belowcaptionskip}{0.2cm}
\centering
\small
  \caption{Results of Recall@20 and NDCG@20 in Top-1000 retrieval: (1) ``R'' and ``N'' denote the Recall and NDCG; (2) the bold indicate \model~and the underline represents the best and second-best performing models; (3) \textbf{$^*$} denotes scenarios where Wilcoxon signed-rank tests indicate statistically significant improvements with over 95\% confidence level.}
  \label{tab:topn}
  \setlength{\tabcolsep}{1.5mm}{
  \begin{tabular}{c|c c| c c| c c| c c| c c|c c} 
    \toprule
    Dataset & \multicolumn{2}{c|}{MovieLens (\%)} & \multicolumn{2}{c|}{Gowalla (\%)} & \multicolumn{2}{c|}{Pinterest (\%)} & \multicolumn{2}{c|}{Yelp2018 (\%)}  & \multicolumn{2}{c|}{AMZ-Book (\%)} & \multicolumn{2}{c}{Dianping (\%)} \\
    Metric & R@20 & N@20  & R@20 & N@20  & R@20 & N@20  & R@20 & N@20  & R@20 & N@20  & R@20 & N@20    \\ 
    \midrule[0.5pt]
    LSH                 & {11.38} & {25.87}  & {8.14} & {12.23}  & {7.88} & {6.71}  & {2.91} & {4.35}  & {2.41} & {2.34}  & {5.85} & {5.84}  \\
    HashNet             & {15.43} & {32.23}  & {11.38} & {13.74}  & {10.27} & {7.33}  & {3.37} & {4.41}  & {2.86} & {2.71}  & {6.24} & {5.59}  \\
    CIGAR               & {14.84} & {31.73}  & {11.57} & {14.21}  & {10.34} & {8.53}  & {3.65} & {4.57}  & {3.05} & {3.03}  & {6.91} & {6.03}  \\
    Hash\_Gumbel            & {16.62}  & {32.48} & {12.26}  & {14.68} & {10.53}  & {8.74} & {3.85}  & {5.12} & {2.69}  & {3.24} & {8.29}  & {6.43} \\
    HashGNN$_{\rm h}$   & {14.21} & {31.83}  & {11.63} & {14.21}  & {10.15} & {8.67}  & {3.77} & {5.04}  & {3.09} & {3.15}  & {8.34} & {6.68}  \\
    HashGNN$_{\rm s}$   & {19.87} & {33.21}  & {13.45} & {14.87}  & {12.38} & {9.11}  & {4.86} & {5.34}  & {3.34} & {3.45}  & {9.57} & {7.13}  \\
    \midrule[0.1pt]
    {BGCH} &\underline{22.86} &\underline{36.26} &\underline{16.73} &\underline{16.48} &\underline{12.78} &\underline{9.42} &\underline{5.51} &\underline{5.84} &\underline{3.48} &\underline{3.92} &\underline{10.66} &\underline{7.63}  \\

    \textbf{\model} &\textbf{24.37$^*$} &\textbf{37.21$^*$} &\textbf{17.19$^*$} &\textbf{17.10$^*$} &\textbf{15.36$^*$} &\textbf{11.21$^*$} &\textbf{5.96$^*$} &\textbf{6.17$^*$} &\textbf{3.78$^*$} &\textbf{4.35$^*$} &\textbf{12.05$^*$} &\textbf{8.81$^*$}  \\
    \% Gain       &{ 6.61\%} &{2.62\%}    &{ 2.75\%} &{ 3.76\%}  &{ 20.19\%} &{ 19.00\%}  &{ 8.17\%} &{ 5.65\%}  &{ 8.62\%} &{ 10.97\%}  &{ 13.04\%} &{ 15.47\%} \\

   \bottomrule
  \end{tabular}}
\end{table*}

\begin{figure*}[t]
\begin{minipage}{1\textwidth}
\includegraphics[width=7.2in]{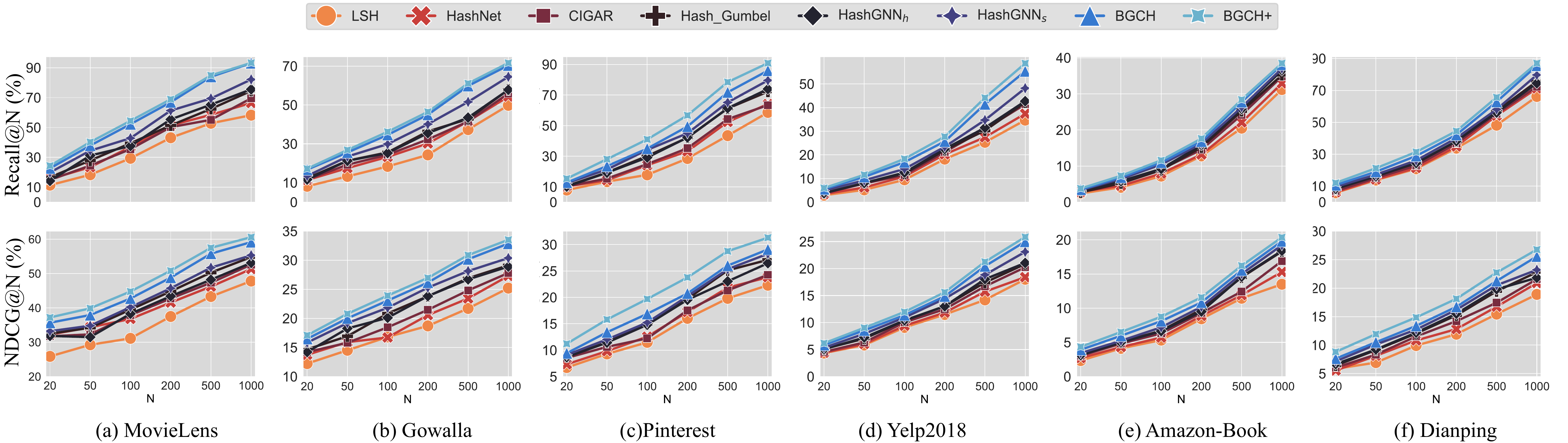}
\end{minipage} 
\caption{Top-N retrieval quality with N in \{20, 50, 100, 200, 500, 1000\} (best view in color).}
\label{fig:topn}
\end{figure*}

\subsection{\textbf{Top-N Hamming Space Query (RQ1)}}
\label{sec:exp_topn}

To assess the \textbf{fine-to-coarse} Top-N ranking capability, we fix N=1000. 
We initially present the results of Recall@20$_{1000}$ and NDCG@20$_{1000}$\footnote{We use notation Recall@20, NDCG@20 if no ambiguity is caused.} for the Top-1000 search in Table~\ref{tab:topn}. 
Additionally, we plot the holistic Recall and NDCG metric curves for the \{20, 50, 100, 200, 500, 1000\} of Top-1000 ranking in Figure~\ref{fig:topn}.
For fair comparison, we ensure that both \model~and the baselines had the convolution iteration number of 2 and the embedding dimension of 256.

\begin{itemize}[leftmargin=*]

\item \textbf{The results clearly establish the superiority of our \model~over previous hashing-based models.}
(1) Firstly, as shown in Table~\ref{tab:topn}, HashGNN outperforms traditional hashing-based baselines such as LSH, HashNet, CIGAR. 
This suggests that directly adapting conventional non-graph-based hashing methods may struggle to achieve comparable performance due to the effectiveness of \textit{graph convolutional} architecture in capturing latent information within the bipartite graph topology for hash encoding preparation.
(2) Secondly, both BGCH and \model~consistently outperform HashGNN over all datasets, thanks to the proposed \textit{adaptive graph convolutional hashing}.
\model~further achieves improvement over its vanilla version BGCH by 2.75\%$\sim$20.19\%, and 2.62\%$\sim$ 19.00\% \textit{w.r.t.} Recall@20 and NDCG@20, respectively. 
This highlights the effectiveness of our newly proposed dual feature contrastive learning paradigm. A comprehensive analysis will be conducted in~\cref{sec:exp_cl}.
(3) Thirdly, we conduct the Wilcoxon signed-rank tests on \model.
The results confirm that all improvements of \model~over BGCH are statistically significant at a 95\% confidence level.
Considering the performance improvement of \model~and BGCH over all other methods, this demonstrates the effectiveness of all proposed modules contained therein. Detailed ablation study will be introduced in~\cref{sec:ablation}.

\item \textbf{By varying N from 20 to 1000, both BGCH and \model~consistently demonstrate competitive performance compared to the baselines.}
The observations from Figure~\ref{fig:topn} are as follows:
(1) Compared to the approximated version of HashGNN, i.e., HashGNN$_{s}$, both BGCH and \model~consistently exhibit stable and significant improvements in both Recall and NDCG metrics across all six benchmarks for N ranging from 20 to 1000. 
(2) In addition to achieving higher retrieval quality, BGCH and \model~possess another advantage over HashGNN$_{s}$: they retain support for bitwise operations, specifically hamming distance matching, which facilitates accelerated inference.
However, HashGNN$_{s}$ adopts a Bernoulli random variable to determine the probability of replacing certain digits in the hash codes with original continuous values, which disables bitwise computation.
As detailed in~\cref{sec:resource}, such design achieves over 8$\times$ inference acceleration compared to HashGNN$_{s}$, which is particularly promising for query-based online matching and retrieval applications.

\end{itemize}

\subsection{\textbf{Comparing to FT32-based Models (RQ2)}}
\label{sec:exp_full}

\begin{table*}[t]
\setlength{\abovecaptionskip}{0.2cm}
\setlength{\belowcaptionskip}{0.2cm}
\centering
\small
\cykk
\caption{Results of Float32-based models.}
  \label{tab:full}
  \setlength{\tabcolsep}{1mm}{
  \begin{tabular}{c|c c| c c| c c| c c| c c|c c} 
    \toprule
    Dataset & \multicolumn{2}{c|}{MovieLens (\%)} & \multicolumn{2}{c|}{Gowalla (\%)} & \multicolumn{2}{c|}{Pinterest (\%)} & \multicolumn{2}{c|}{Yelp2018 (\%)}  & \multicolumn{2}{c|}{AMZ-Book (\%)} & \multicolumn{2}{c}{Dianping (\%)} \\
    Metric & R@1000 & N@1000  & R@1000 & N@1000  & R@1000 & N@1000  & R@1000 & N@1000  & R@1000 & N@1000  & R@1000 & N@1000    \\ 
    \midrule[0.5pt]
    NeurCF           & {96.90} & {58.76}  & {73.28} & {32.07}  & {91.29} & {28.79}  & {58.83} & {24.69}  & {40.29} & {19.83}  & {89.39} & {25.54}  \\
    NGCF             & {97.32} & {60.28}  & {76.16} & {32.13}  & {92.93} & {29.78}  & {59.97} & {25.23}  & {41.22} & {20.37}  & {90.92} & {25.76}  \\
    DGCF             & \textbf{98.48} & {62.41}  & {76.90} & {34.97}  & {96.52} & \textbf{31.47}  & {62.18} & {26.28}  & {42.71} & {21.74}  & {92.66} & {26.87}  \\
    LightGCN         & {98.27} & \textbf{62.88}  & \textbf{77.74} & \textbf{35.26}  & \textbf{96.59} & {31.32}  & {62.31} & \textbf{26.55}  & \textbf{43.89} & \textbf{21.92}  & \textbf{94.37} & \textbf{27.28}  \\
    \midrule[0.1pt]
    {BGCH} 				&{90.44} &{59.16} &{70.45} &{32.87}		&{86.30} &{29.09}&{56.11} &{25.01}&{38.27} &{19.79}&{88.26} &{25.57} \\
     \% Capability       &{91.84\%} &{94.08\%} &{90.62\%} &{93.22\%}    &{89.07\%} &{92.44\%}   &{90.05\%} &{94.20\%}   &{87.20\%} &{90.28\%}   &{93.53\%} &{93.73\%}    \\
    \midrule[0.01pt]
    \textbf{\model} 	&{92.43} &{60.60} &{72.22} &{33.52}&{91.29} &{31.31}&{58.28} &{25.85}&{39.38} &{20.36}&{89.83} &{26.77} \\
    \% Capability       &{93.86\%} &{96.37\%} &{92.90\%} &{95.07\%}   &{94.51\%} &{99.49\%}   &{93.53\%} &{97.36\%}   &{89.72\%} &{92.88\%}   &{95.19\%} &{98.13\%}    \\

   \bottomrule
  \end{tabular}}
\end{table*}

In this section, we compare both BGCH and \model~with several full-precision (FT32-based) models to evaluate the long-list search quality. 
The observations from Table~\ref{tab:full} are as follows:
(1) Both BGCH and \model~consistently deliver competitive performance compared to early full-precision models, e.g., NeurCF and NGCF, across all datasets. 
In terms of the state-of-the-art model LightGCN, both BGCH and \model~achieve over {\cykk 87\% of the Recall@1000 and 90\% of NDCG@1000 capability.
(2) We notice that, compared to NDCG@1000 metric, the Recall@1000 results are generally with larger numerical values. 
This is because the recall metric focuses on how many ground-truth items are retrieved, and thus these models perform quite well.
On the other hand, the NDCG metric cares more about the ranking capability, which usually leads to a relatively smaller values.
(3) The performance of BGCH and \model~highlights their effectiveness in ensuring high-quality long-list Top-N retrieval with competitive Recall@1000 and NDCG@1000 metrics, where \model~further improves its performance across all datasets. }
This is particularly valuable in industrial applications such as recommender systems, which involve two major stages: \textit{candidate generation} and \textit{re-ranking}. 
Having a high-quality candidate generation significantly reduces the complexity of the subsequent re-ranking stage as the search space is substantially pruned.
(4) Considering the \textit{efficiency in Hamming space retrieval} and the \textit{reduced space cost} of those learned hash codes, BGCH and \model~can be seen as viable alternatives to these full-precision models to balance retrieval performance and computational efficiency, especially in scenarios with limited computation resources.

{\cyk
\subsection{\textbf{Study of Dual Feature Contrastive Learning (RQ3)}}
\label{sec:exp_cl}

\begin{figure*}[t]
\begin{minipage}{1\textwidth}
\includegraphics[width=7.2in]{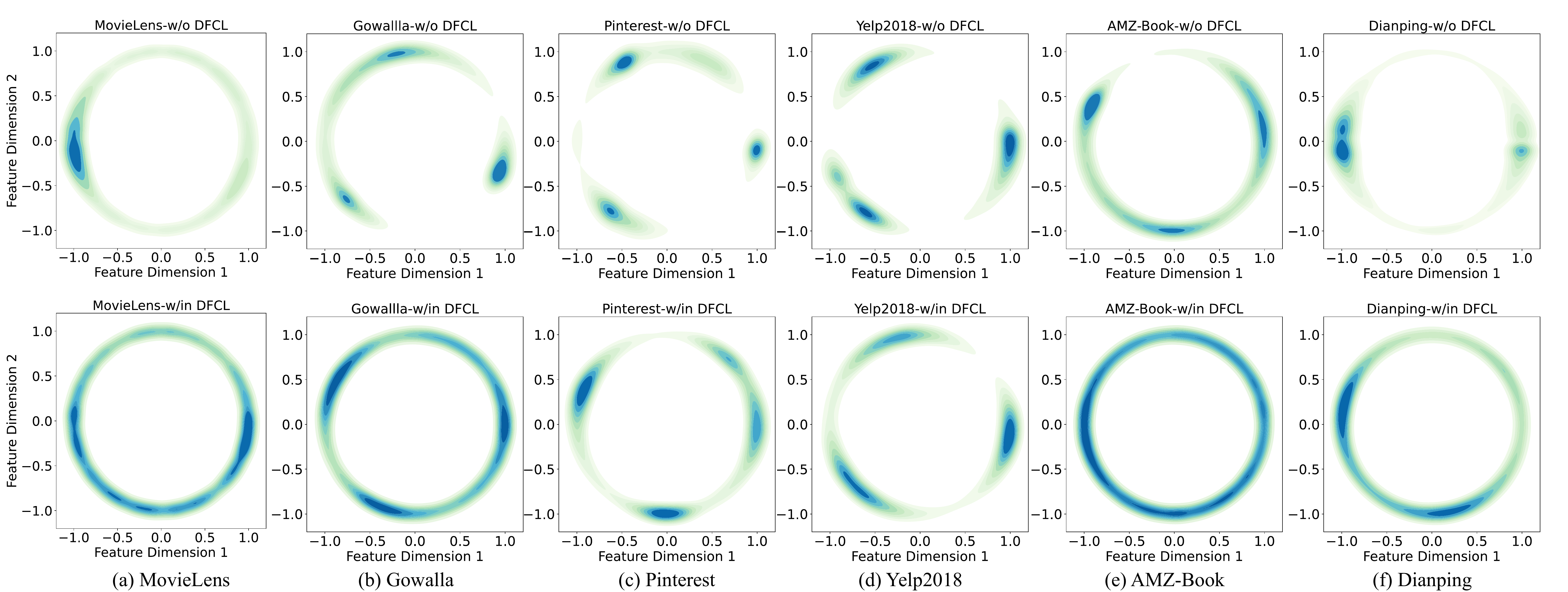}
\end{minipage} 
\caption{\newfig Distribution illustration of learned hash codes between BGCH and \model~using Gaussian kernel density estimation (KDE) over six datasets with (1) bandwidth as 0.1 and (2) number of contour levels as 10 (best view in color).}
\label{fig:kde}
\end{figure*}

In this section, we conduct a comprehensive empirical analysis to examine the impact of our dual feature contrastive learning on the quality of hashed representations.

\textbf{Structure Manipulation V.S. Feature Augmentation.}
The conventional contrastive learning usually requires explicit graph structure manipulation~\cite{liu2022graph} such as:
\begin{itemize}[leftmargin=*]
\item \textit{Node dropout (denoted by ND}): with a certain probability, the graph node and its connected edges are discarded.

\item \textit{Edge dropout (denoted by ED)}:  it drops out the edges in a graph with a dropout ratio. 

\item \textit{Graph random walk (denoted by GRW)}: it essentially operates as the multi-layer edge dropouts.
\end{itemize}

We implement these structural manipulation strategies and show the comparison results in Table~\ref{tab:othercl}.
We can clearly observe that:
(1) Edge dropout (ED) generally achieves the most competitive performance among all structure augmentation strategies.
(2) Our model \model~incorporates dual feature augmentation in the embedding space for contrastive learning, further improving performance across all datasets compared to the ED variant. This highlights the effectiveness of our proposed approach.
{\cykk (3) To provide a more fine-grained comparison, we further combine these strategies. As shown in Table~\ref{tab:othercl}, we observe that variants with ED generally perform well, compared to the other; however, even with all these strategies integrated, it still consistently under-performs our model \model.
(4) Considering the heavy time cost of all these explicit structural manipulation, our feature-wise augmentation offers more flexibility when training \model~in batch on the fly. This makes it better suited for handling larger bipartite graphs.
}

\begin{table}[t]
\centering
\notsotiny
\newfig
\caption{\newfig Comparison of structure-manipulation-based variants and \model~in terms of Recall@20.}
\label{tab:othercl}
\setlength{\tabcolsep}{0.5mm}{
\begin{tabular}{c |c | c | c | c | c| c}
\toprule
    ~            & MovieLens & Gowalla & Pinterest & Yelp2018 & AMZ-Book & Dianping \\
\midrule
  ND         & {22.80({\drop \tiny -6.44\%})}   & {16.48({\drop \tiny -4.13\%})}   & {14.91({\drop \tiny -2.93\%})}  & {5.54({\drop \tiny -7.05\%})}& {3.73({\drop \tiny -1.32\%})}& {11.75({\drop \tiny -2.49\%})}\\
  ED         & {23.58({\drop \tiny -3.24\%})}   & {16.88({\drop \tiny -1.80\%})}   & {14.85({\drop \tiny -3.32\%})}  & {5.51({\drop \tiny -7.55\%})}& {3.78 ({\tiny 0.00\%})}& {11.89({\drop \tiny -1.33\%})}  \\
 GRW        &{22.39}{{\drop \tiny (-8.12\%)}}  & {16.14({\drop \tiny -6.11\%})}   & {14.77({\drop \tiny -3.84\%})}  & {5.60({\drop \tiny -6.04\%})}& {3.76({\drop \tiny -0.53\%})}& {11.84({\drop \tiny -1.74\%})} \\  

{\cykk ND+ED} & {\cykk 23.87({\drop \tiny -2.05\%})}  & {\cykk 17.08({\drop \tiny -0.64\%})}  & {\cykk 15.11({\drop \tiny -1.63\%})}  & {\cykk 5.82({\drop \tiny -2.35\%})}  & {\cykk 3.77({\drop \tiny -0.26\%})}  & {\cykk 11.98({\drop \tiny -0.58\%})}  \\
{\cykk ND+GRW} & {\cykk 22.97({\drop \tiny -5.74\%})}  & {\cykk 16.43({\drop \tiny -4.42\%})}  & {\cykk 14.98({\drop \tiny -2.47\%})}  & {\cykk 5.71({\drop \tiny -4.19\%})}  & {\cykk 3.72({\drop \tiny -1.59\%})}  & {\cykk 11.89({\drop \tiny -1.33\%})}  \\
{\cykk ED+GRW} & {\cykk 23.51({\drop \tiny -3.53\%})}  & {\cykk 16.92({\drop \tiny -1.57\%})}  & {\cykk 15.07({\drop \tiny -1.89\%})}  & {\cykk 5.79({\drop \tiny -2.85\%})}  & {\cykk 3.77({\drop \tiny -0.26\%})}  & {\cykk 11.94({\drop \tiny -0.91\%})}  \\
{\cykk ND+ED+GRW} & {\cykk 24.13({\drop \tiny -0.98\%})}  & {\cykk 17.06({\drop \tiny -0.76\%})}  & {\cykk 15.14({\drop \tiny -1.43\%})}  & {\cykk 5.87({\drop \tiny -1.51\%})}  & {\cykk 3.81({\tiny +0.79\%})}  & {\cykk 12.01({\drop \tiny -0.33\%})}  \\
\midrule[0.1pt]
  \model            & \textbf{24.37}  & \textbf{17.19}   & \textbf{15.36}   & \textbf{5.96}   & \textbf{3.78}   & \textbf{12.05}   \\
\bottomrule
\end{tabular}}
\end{table}

\textbf{Regularization Effect of Representation Uniformity.}
Previous work~\cite{wang2020understanding} identifies that contrastive learning can help to improve the \textit{uniformity} of image representations.
To study its effect in learning hash codes, we visualize the feature distributions of learned hash codes between BGCH and \model~as follows.
Firstly, we conduct the dimension reduction using T-SNE~\cite{hinton2002stochastic} to project the acquired representations into the 2-dimensional space.
The projected representations are then normalized onto the unit hypersphere, ensuring a radius of 1.
Next, we employ nonparametric Gaussian kernel density estimation (KDE)~\cite{botev2010kernel} to depict the distributions in Figure~\ref{fig:kde}.

In Figure~\ref{fig:kde}, the upper row presents BGCH with no dual feature contrastive learning, denoted by \textsl{w/o DFCL}; and the lower row corresponds to \model~with the proposed module, i.e., \textsl{w/in DFCL}.
We observe notably different feature distributions between the two scenarios.
In the \textsl{w/o DFCL} scenario, the features tend to be highly clustered in specific areas. 
Besides, the \textsl{w/in DFCL} scenario however exhibits more uniform distributions, with features residing on wider arcs across all six datasets.
The observed differences in feature distributions align with the performance improvements shown in Tables~\ref{tab:topn}-\ref{tab:full}.
This indicates that the uniformity of learned hashed codes plays a decisive role in the quality of graph hashing and ultimately impacts the prediction performance.

\textbf{Effect of Dual Feature Contrastive Learning.}
To investigate the contribution of two feature augmentations, we set two variants via disabling the contrastive learning on different stages: (1) disabling contrastive learning on intermediate continuous embeddings, denoted as \textsl{w/o $\mathcal{L}^1_{cl}$}, and (2) disabling learning on output hash codes, denoted as \textsl{w/o $\mathcal{L}^2_{cl}$}. 
As presented in Table~\ref{tab:cl_loss}, we observe that both loss terms are jointly important for achieving desired performance.

However, for the MovieLens dataset, the variant \textsl{w/o $\mathcal{L}^2_{cl}$} showed slightly better performance. 
This can be attributed to the high density of the MovieLens graph compared to other datasets.
Specifically, let the density be defined by $\frac{\left|\mathcal{V}_1\right| \times\left|\mathcal{V}_2\right|}{|\mathcal{E}|}$. 
Sparse datasets with smaller densities, such as Gowalla (0.00084), Pinterest (0.00267), Yelp2018 (0.00130), and AMZ-Book (0.00062), are significantly influenced by our feature augmentation.
On the other hand, denser datasets like MovieLens (0.04190) and Dianping (0.02210) have more graph edges for model training and are therefore less sensitive to the augmentations.
These findings highlight the impact of our dual feature augmentations, with sparser graphs benefiting more from such techniques.

\begin{figure*}[tp]

\begin{minipage}{1\textwidth}
\includegraphics[width=7.15in]{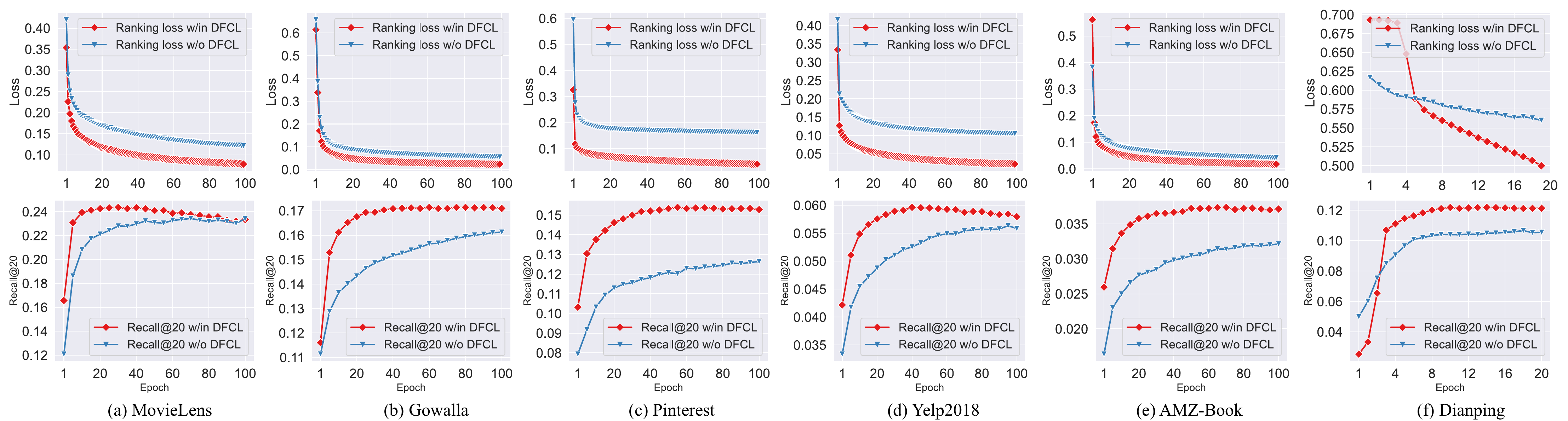}
\end{minipage} 
\caption{\newfig Illustration of model convergence in terms of (1) ranking loss (upper row) and (2) Recall metric (lower row).}
\label{fig:converge}
\end{figure*}

\begin{table}[t]
\centering
\scriptsize
\newfig
\caption{\newfig Results of different feature augmentation strategies in terms of Recall@20.}
\label{tab:cl_loss}
\setlength{\tabcolsep}{0.3mm}{
\begin{tabular}{c |c | c | c | c | c| c}
\toprule
    ~            & MovieLens & Gowalla & Pinterest & Yelp2018 & AMZ-Book & Dianping \\
\midrule
  \textsl{w/o $\mathcal{L}^1_{cl}$}        & {24.23({\drop \tiny -0.57\%})}    & {16.85({\drop \tiny -1.98\%})}   & {14.73({\drop \tiny -4.10\%})}   & {5.82({\drop \tiny -2.35\%})}   & {3.67({\drop \tiny -2.91\%})}   & {11.89({\drop \tiny -1.33\%})}   \\
  \textsl{w/o $\mathcal{L}^2_{cl}$}        & {24.56({\tiny +0.78\%})}    & {16.88({\drop \tiny -1.80\%})}   & {14.98({\drop \tiny -2.47\%})}   & {5.93({\drop \tiny -0.50\%})}   & {3.64({\drop \tiny -3.70\%})}   & {11.94({\drop \tiny -0.91\%})}   \\
\midrule[0.1pt]
  \model            & \textbf{24.37}  & \textbf{17.19}   & \textbf{15.36}   & \textbf{5.96}   & \textbf{3.78}   & \textbf{12.05}   \\
\bottomrule
\end{tabular}}
\end{table}

\textbf{Convergence Analysis.}
Lastly, we discuss the model convergence speed over all datasets and made two key observations.
We collect the metrics within the first 10\% of the whole training epochs and depict results in Figure~\ref{fig:converge}.
Firstly, our proposed dual feature contrastive learning in \model~contributes to a faster convergence to the ranking loss.
Secondly, the faster convergence facilitated by our DFCL design enables \model~to achieve its optimal performance much earlier compared to BGCH without dual feature contrastive learning. 
For instance, in MovieLens dataset, \model~already achieves its best performance with only about 20 epochs.
Similarly, for other datasets, \model~reaches its peak performance after approximately the 40$^{\text{th}}$ epoch, while BGCH without DFCL is still several epochs distant from its best performance.
In summary, these observations demonstrate that our DFCL design in \model~can effectively accelerate the model's convergence, enabling \model~to converge faster and achieve superior performance compared to its counterpart without the dual feature contrastive learning.

}

 \begin{table*}[t]
\scriptsize
\caption{Ablation study.}
\label{tab:ablation}
\setlength{\tabcolsep}{0.1mm}{
\begin{tabular}{c |c c|c c|c c|c c|c c|c c}
\toprule
 \multirow{2}*{Variant} & \multicolumn{2}{c|}{MovieLens} & \multicolumn{2}{c|}{Gowalla} & \multicolumn{2}{c|}{Pinterest} & \multicolumn{2}{c|}{Yelp2018}  &\multicolumn{2}{c|}{AMZ-Book} &\multicolumn{2}{c}{Dianping} \\
               ~  & R@20 & N@20 & R@20 & N@20 & R@20 & N@20 & R@20 & N@20 & R@20 & N@20 & R@20 & N@20\\
\midrule

 \textsl{w/o AH-TA}    &{21.23}{{\drop \notsotiny (-12.88\%)}} &{31.49}{{\drop \notsotiny (-15.37\%)}} &{15.91}{{\drop \notsotiny (-7.45\%)}} &{14.26}{{\drop \notsotiny (-16.61\%)}} &{13.77}{{\drop \notsotiny (-10.35\%)}} &{9.50}{{\drop \notsotiny (-15.25\%)}} &{5.12}{{\drop \notsotiny (-14.09\%)}} &{5.66}{{\drop \notsotiny (-8.27\%)}} &{2.98}{{\drop \notsotiny (-21.16\%)}} &{3.22}{{\drop \notsotiny (-25.98\%)}} &{\ \ 11.02}{{\drop \notsotiny (-8.55\%)}} &{7.45}{{\drop \notsotiny (-15.44\%)}}  \\

\textsl{w/o AH-RF}   &{18.38}{{\drop \notsotiny (-24.58\%)}} &{28.13}{{\drop \notsotiny (-24.40\%)}} &{13.19}{{\drop \notsotiny (-23.27\%)}} &{12.77}{{\drop \notsotiny (-25.32\%)}} &{13.05}{{\drop \notsotiny (-15.04\%)}} &{9.11}{{\drop \notsotiny (-18.73\%)}} &{4.71}{{\drop \notsotiny (-20.97\%)}} &{4.97}{{\drop \notsotiny (-19.45\%)}} &{3.24}{{\drop \notsotiny (-14.29\%)}} &{3.82}{{\drop \notsotiny (-12.18\%)}} &{\, \ 9.89}{{\drop \notsotiny (-17.93\%)}} &{7.48}{{\drop \notsotiny (-15.10\%)}}  \\

 \textsl{w/in LF}  &{22.24}{{\drop \notsotiny (-8.74\%)}} &{35.66}{{\drop \notsotiny (-4.17\%)}} &{16.24}{{\drop \notsotiny (-5.53\%)}} &{16.17}{{\drop \notsotiny (-5.44\%)}} &{13.38}{{\drop \notsotiny (-12.89\%)}} &{9.55}{{\drop \notsotiny (-14.81\%)}} &{5.23}{{\drop \notsotiny (-12.25\%)}} &{5.53}{{\drop \notsotiny (-10.37\%)}} &{3.45}{{\drop \notsotiny (-8.73\%)}} &{3.89}{{\drop \notsotiny (-10.57\%)}} &{10.56}{{\drop \notsotiny (-12.37\%)}} &{7.91}{{\drop \notsotiny (-10.22\%)}}  \\

 \textsl{w/o $\mathcal{L}_{bpr}$} &{22.51}{{\drop \notsotiny (-7.63\%)}} &{35.31}{{\drop \notsotiny (-5.11\%)}} &{15.22}{{\drop \notsotiny (-11.46\%)}} &{15.94}{{\drop \notsotiny (-6.78\%)}} &{13.46}{{\drop \notsotiny (-12.37\%)}} &{9.92}{{\drop \notsotiny (-11.51\%)}} &{5.58}{{\drop \notsotiny (-6.38\%)}} &{5.72}{{\drop \notsotiny (-7.29\%)}} &{3.50}{{\drop \notsotiny (-7.41\%)}} &{3.93}{{\drop \notsotiny (-9.66\%)}} &{10.84}{{\drop \notsotiny (-10.04\%)}} &{7.76}{{\drop \notsotiny (-11.92\%)}}  \\

\textsl{w/o $\mathcal{L}_{cl}$} &{22.43}{{\drop \notsotiny (-7.96\%)}} &{35.55}{{\drop \notsotiny (-4.46\%)}} &{14.76}{{\drop \notsotiny (-14.14\%)}} &{15.25} {{\drop \notsotiny (-10.82\%)}} &{13.79}{{\drop \notsotiny (-10.22\%)}} &{9.88}{{\drop \notsotiny (-11.86\%)}} &{5.66}{{\drop \notsotiny (-5.03\%)}} &{5.84}{{\drop \notsotiny (-5.35\%)}} &{3.20}{{\drop \notsotiny (-15.34\%)}} &{3.69}{{\drop \notsotiny (-15.17\%)}} &{11.33}{{\drop \notsotiny (-5.98\%)}} &{7.90}{{\drop \notsotiny (-10.33\%)}} \\

\midrule[0.1pt]
  \textbf{\model}   &\textbf{24.37}& \textbf{37.21}  &\textbf{17.19}& \textbf{17.10} &\textbf{15.36}& \textbf{11.21} &\textbf{5.96} & \textbf{6.17} &\textbf{3.78} & \textbf{4.35} &\textbf{12.05}& \textbf{8.81}    \\ 

\bottomrule
\end{tabular}}
\end{table*}

\subsection{\textbf{Ablation Study (RQ4)}}
\label{sec:ablation}
We evaluate the necessity of model components with Top-20 search metrics and report the results in Table~\ref{tab:ablation}.

{\textbf{Effect of Adaptive Graph Convolutional Hashing.}}
We study this model component by setting two variants, where: (1) \textsl{w/o AH-TA} only disables the \textit{topology-awareness of hashing} and uses the final encoder after all graph convolutions (similar to conventional approaches~\cite{hashgnn,hashnet}); (2) \textsl{w/o AH-RF} removes the \textit{rescaling factors}.
The results from Table~\ref{tab:ablation} results produce to the following observations:
\begin{enumerate}[leftmargin=*]
\item 
The variant \textsl{w/o AH-TA} underperforms \model.
This indicates that solely relying on the final output embeddings from the Graph Convolutional Network (GCN) framework may not adequately capture the unique latent node features necessary for effective hashing, especially considering the rich structural information present at different graph depths. 
In contrast, \model~leverages intermediate information to enrich the representations, resulting in topology-aware hashing that effectively addresses the limited expressivity of discrete hash codes.

\item In addition to topology-aware hashing, the inclusion of \textit{rescaling factors} (as introduced in Equation~(\ref{eq:rescale})) plays a crucial role in performance improvement.
The removal of these factors from \model~(variant \textsl{w/o AH-RF}) leads to significant performance decay. 
Although the computation of these factors is based on direct calculations and may not be theoretically optimal, they capture the numerical uniqueness of embeddings for subsequent hash encoding, which substantially enhances \model's prediction capability.
The \textit{determinacy} design of such factor computation is explored in detail in the following section.
\end{enumerate}


{\textbf{Design of Learnable Rescaling.}}
To learn the performance of \textit{learnable rescaling factors}, we include another variant namely \textsl{w/in LF}.
However, as shown in Table~\ref{tab:ablation}, the design of learnable rescaling factors in \textsl{w/in LF} does not achieve good performance as expected. 
One possible explanation for this outcome is that our current model does not impose strong mathematical constraints to the learnable factors ($\alpha_x$), e.g., $\alpha_x^{(l)} = \argmin({\emb{V}}^{(l)}_x$, $\alpha_x^{(l)} \emb{{Q}}^{(l)}_x)$, mainly because of its additional training complexity.
Consequently, relying solely on stochastic optimization methods, such as stochastic gradient descent (SGD), may struggle to find the optimal values for these factors.
Considering the additional search space introduced by the incorporation of learnable rescaling factors and the limitations of stochastic optimization, we argue that our deterministic rescaling method is a simple yet effective approach in practice. It strikes a balance between computational efficiency and performance, making it a preferable choice for our proposed model.

{\textbf{Effect of Multi-loss in Optimization.}}
To investigate the impact of BPR loss $\mathcal{L}_{bpr}$ and contrastive learning loss $\mathcal{L}_{cl}$, we set two variants, termed by \textsl{w/o $\mathcal{L}_{bpr}$} and \textsl{w/o $\mathcal{L}_{cl}$}, to optimize \model~separately.
These variants are applied while keeping all other model components intact. 
As shown in Table~\ref{tab:ablation}, partially using each one of $\mathcal{L}_{bpr}$ and $\mathcal{L}_{cl}$ can not yield the expected performance.
This finding validates the effectiveness of our proposed multi-loss design.
while $\mathcal{L}_{bpr}$ guides the model to assign higher prediction values to observed edges, i.e., $\emb{Y}_{x,y}=1$, than the unobserved node pair counterparts, 
$\mathcal{L}_{cl}$ helps to alleviate the data sparsity issue and promotes the uniformity of output representations.
helps address data sparsity issues and promotes the uniformity of output representations.
By jointly optimizing these two loss functions, our model \model~can learn high-quality binarized embeddings from $\mathcal{L}_{cl}$, and maintain rich relative order information regularized by $\mathcal{L}_{bpr}$ accordingly.
Hence, our multi-loss framework enables \model~to achieve superior performance in terms of both representation quality and ranking capability.

\subsection{\textbf{Resource Consumption Analysis (RQ5)}}
\label{sec:resource}
Due to the various value ranges over all six datasets, we compactly report the value ratios of \model~over the state-of-the-art hashing-based model HashGNN$_s$ in Figure~\ref{fig:tradeoff}.

\begin{figure}[t]
\begin{minipage}{0.5\textwidth}
\includegraphics[width=3.5in]{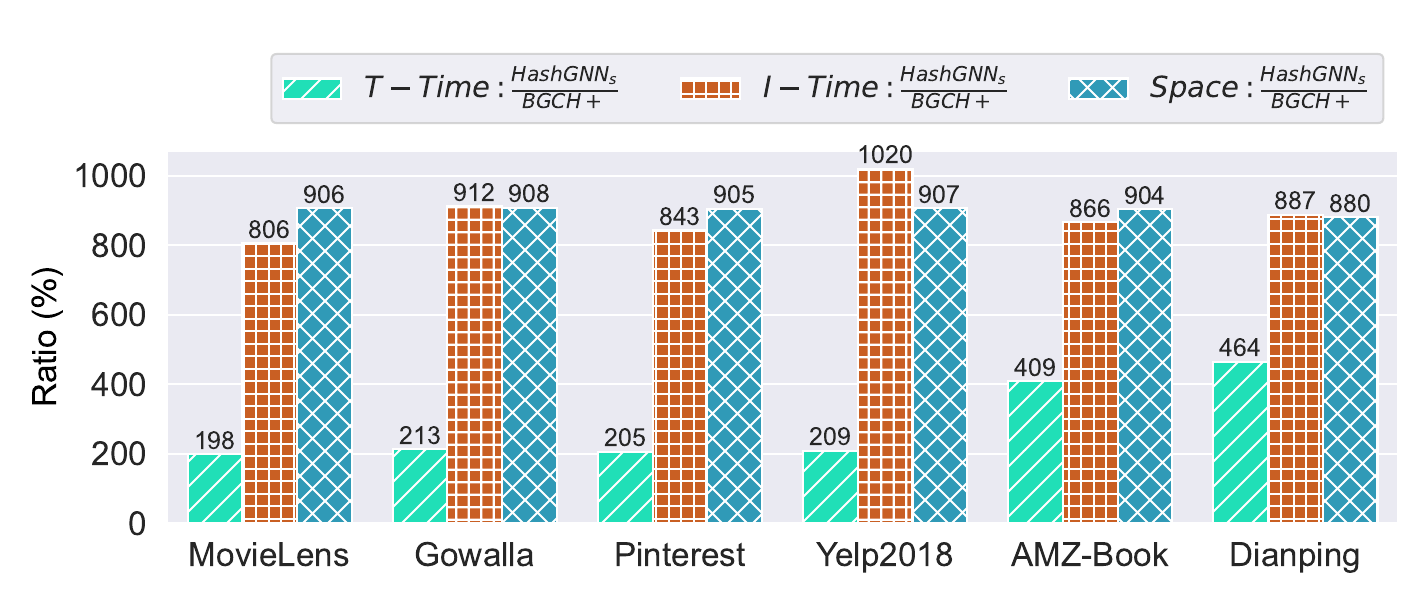}
\end{minipage} 
\caption{Resource consumption ratios.}
\label{fig:tradeoff}
\end{figure}

{\textbf{Model Training Time Cost.}}
The training time cost, represented by the metric ``\textit{T-Time}'' in Figure~\ref{fig:tradeoff}, reveals that training HashGNN$_s$ is more time-consuming compared to our proposed model \model.
This difference can be attributed to the architectural disparities between the two models.
HashGNN$_s$ utilizes the earlier Graph Convolutional Network (GCN) framework~\cite{graphsage} as the model backbone, which involves additional operations such as self-connection, feature transformation, and nonlinear activation.
On the other hand, our model, \model, follows the latest GCN framework~\cite{lightgcn} and eliminates these operations, resulting in reduced computational complexity and faster training.
Furthermore, on the two largest datasets, AMZ-Book and Dianping, the training cost ratio becomes even more pronounced, reaching approximately 4 to 4.6 times higher for HashGNN$_s$ compared to \model. This is because of the need to decrease the batch size of HashGNN$_s$ to ensure a manageable training process.

{\textbf{Online Inference Time Cost.}}
We randomly generate 1,000 queries and evaluate the computation time cost.
To ensure a fair comparison, we disable all parallel arithmetic techniques, such as MKL and BLAS, by using an open-source toolkit\footnote{\notsotiny\url{https://www.lfd.uci.edu/~gohlke/pythonlibs/}}.
Indicated by ``\textit{I-Time}'' in Figure~\ref{fig:tradeoff}, our model with Hamming distance matching generally achieves over 8$\times$ computation acceleration over HashGNN$_s$ on all datasets.
This is because, as we have explained in~\cref{sec:exp_topn}, HashGNN$_s$ randomly replaces the hash codes with their original continuous embeddings for relaxation purposes and relies on floating-point arithmetics, which sacrifices the computation acceleration provided by bitwise operations.

{\textbf{Hash Codes Memory Footprint.}}
Embedding binarization can largely reduce memory space consumption.
Compared to the state-of-the-art model HashGNN$_s$, our \model~achieves about 9$\times$ of memory space reduction for the hash codes.
As we have just explained, since HashGNN$_s$ interprets hash codes with random real-value digits, it thus requires additional cost to distinguish binary digits from full-precision ones. 
In contrast, \model~separates the storage of binarized encoding parts and corresponding rescaling factors. 
This separation allows for optimized space overhead and efficient storage of the binarized embeddings.

\subsection{\textbf{Study of Fourier Gradient Estimation (RQ6)}}
\label{sec:fs_exp}
We take our largest dataset Dianping for illustration and the analysis can be generally popularized to the other datasets.

\begin{figure}[t]
\begin{minipage}{0.5\textwidth}
\includegraphics[width=3.4in]{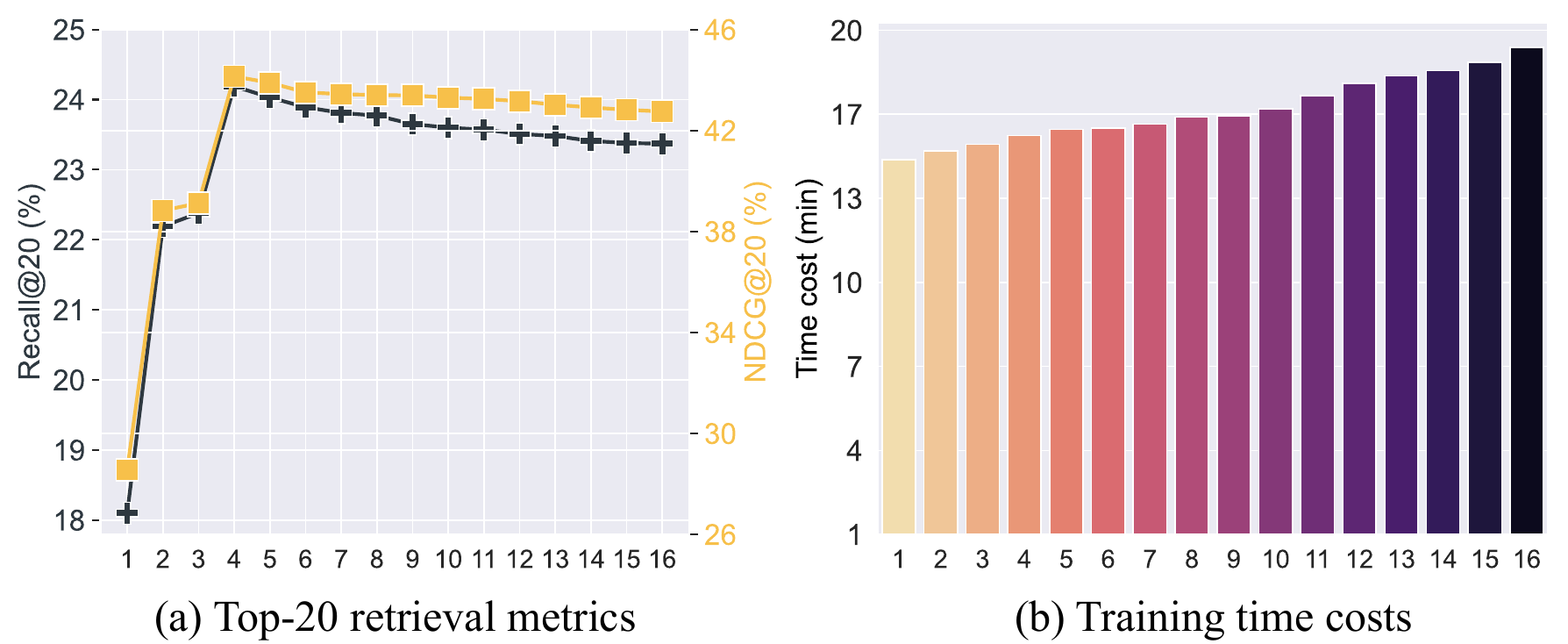}
\end{minipage} 
\caption{Fourier Series decomposition term $n$ in \model.}
\label{fig:fs_n}
\end{figure}

\begin{table}[t]
\centering
\scriptsize
\caption{Gradient estimator comparison on Recall@20.}
\label{tab:estimator}
\setlength{\tabcolsep}{0.1mm}{
\begin{tabular}{c |c | c | c | c | c| c}
\toprule
    ~            & MovieLens & Gowalla & Pinterest & Yelp2018 & AMZ-Book & Dianping \\
\midrule
  STE         & {22.14({\drop \tiny -9.15\%})}   & {15.34({\drop \tiny -10.76\%})}   & {13.57({\drop \tiny -11.65\%})}   & {5.42({\drop \tiny -9.06\%})}   & {3.41({\drop \tiny -9.79\%})}   & {11.39({\drop \tiny -5.48\%})}  \\
  Tanh         & {22.66({\drop \tiny -7.02\%})}   & {15.82({\drop \tiny -7.97\%})}   & {14.45({\drop \tiny -5.92\%})} & {5.82({\drop \tiny -2.35\%})}   & {3.43({\drop \tiny -9.26\%})}     & {11.74({\drop \tiny -2.57\%})}  \\
  SignSwish         & {23.14({\drop \tiny -5.05\%})}   & {16.70({\drop \tiny -2.85\%})}   & {14.52({\drop \tiny -5.47\%})}   & {5.67({\drop \tiny -4.87\%})}   & {3.42({\drop \tiny -9.52\%})}   & {11.57({\drop \tiny -3.98\%})}  \\
  Sigmoid         & {23.44({\drop \tiny -3.82\%})}   & {15.89({\drop \tiny -7.56\%})}   & {14.28({\drop \tiny -7.03\%})}   & {5.79({\drop \tiny -2.85\%})}   & {3.56({\drop \tiny -5.82\%})}   & {11.89({\drop \tiny -1.33\%})}  \\
  PBE         & {22.57({\drop \tiny -7.39\%})}   & {16.27({\drop \tiny -5.35\%})}   & {14.20({\drop \tiny -7.55\%})}   & {5.48({\drop \tiny -8.05\%})}   & {3.67({\drop \tiny -2.91\%})}   & {11.55({\drop \tiny -4.15\%})}  \\
\midrule[0.1pt]
  \model            & \textbf{24.37}  & \textbf{17.19}   & \textbf{15.36}   & \textbf{5.96}   & \textbf{3.78}   & \textbf{12.05}   \\
\bottomrule
\end{tabular}}
\end{table}

{\textbf{Effect of Decomposition Term $n$.}}
We vary the decomposition term $n$ from 1 to 16.
As shown in Figure~\ref{fig:fs_n}, we have two observations:
(1) The choice of the decomposition term has a significant impact on the retrieval quality.
Theoretically, larger values of $n$ can provide more accurate gradient estimations. 
However, in practice, excessively large $n$ may lead to overfitting. Therefore, it is advisable to choose a moderate value, such as $n$ = 4 in Figure~\ref{fig:fs_n}(a), to maximize model performance.
(2) As we vary $n$ from 1 to 16, the training time per iteration of \model~gradually increased. 
This observation aligns with our complexity analysis in~\cref{sec:complexity}, where we identified that the training cost is primarily associated with other modules like graph convolutional hashing, rather than the gradient estimation process.

{\textbf{Comparison with Other Gradient Estimators.}}
We include several recent gradient estimators, i.e., \textit{Tanh-like}~\cite{qin2020forward,gong2019differentiable}, \textit{SignSwish}~\cite{darabi2018bnn}, \textit{Sigmoid}~\cite{sigmoid}, and \textit{projected-based estimator}~\cite{RBCN} (denoted as PBE).
(1) The results summarized in Table~\ref{tab:estimator} clearly demonstrate the superiority of our method over $\sign(\cdot)$ function approximation in gradient estimation.
As we have briefly explained, most existing estimators, which employ the \textit{visually similar} function approximation to $\sign(\cdot)$, generally provide better gradient estimation than Straight-Through Estimator (STE).
(2) However, for bipartite graphs with high sparsity, e.g., Gowalla (0.00084) and AMZ-Book (0.00062), graph-based models may struggle to collect sufficient structural information for effective training of hash codes.
With limited training samples, these \textit{theoretically irrelevant} estimators may fail to rectify optimization deviations effectively, leading to noticeable performance gaps compared to our Fourier Series decomposition estimator.


\section{{Conclusion}}
\label{sec:con}
In this paper, we revisit the learning to hash for efficient Hamming space search over graph structure data and propose \model~for performance improvement.
Compared to its predecessor, \model~is further equipped with a novel dual feature contrastive learning paradigm, which operates on the latent features instead of the input graphs.
Such design well enhances the robustness of learned hash codes against variations and thereby promotes the extraction of graph semantics in hash encoding.
The empirical analyses over six real-world datasets demonstrate that the proposed method consistently outperforms existing hashing-based models while providing an alternative to full-precision models in scenarios with limited resources.

\begin{acks}
We thank anonymous reviewers for their insightful comments and suggestions.
Yankai Chen, Yifei Zhang and Irwin King were supported by the National Key Research and Development Program of China (No. 2018AAA0100204) and by the Research Grants Council of the Hong Kong Special Administrative Region, China (CUHK 2410021, Research Impact Fund, No. R5034-18).
Yixiang Fang was supported by NSFC Grant (62102341).
\end{acks}

\clearpage

\bibliographystyle{ACM-Reference-Format}
\bibliography{ref}

\clearpage
\appendix

\subsection{\textbf{Loss Landscape Visualization}}
\label{sec:visualization}
We simulate the optimization trajectories of learnable embeddings and visually compare the loss landscapes of non-hashing and hashing versions in Figure~\ref{fig:model}(a).
Specifically, we manually assign perturbations~\cite{nahshan2021loss, bai2020binarybert} to the embeddings on MovieLens dataset as: {\footnotesize$\emb{V}_x^{(l)} = \emb{V}_x^{(l)} \pm p \cdot$ $\overline{|{\emb{V}_x^{(l)}|}}$ $\cdot \emb{1}^{(l)}$}.
where {\footnotesize$\overline{|{\emb{V}_x^{(l)}|}}$} represents the absolute mean of entries in {\footnotesize$\emb{V}_x^{(l)}$} and perturbation magnitudes $p$ are from $\{0.01, \cdots, 0.50\}$. $\emb{1}$ is an all-one vector. 
For pairs of perturbed node embeddings, we plot their loss distribution accordingly.
As we can observe, the non-hashing version produces a flat loss surface, showing the local convexity.
On the contrary, the hashing counterpart has a bumping and complex loss landscape.

\end{document}